\newtheorem{thm}{Theorem}
\newtheorem{assump}{Assumption}
\newtheorem{cond}{Condition}
\newtheorem{definition}{Definition}
\newtheorem{lem}[thm]{Lemma}
\theoremstyle{remark}
\begin{document}
\title{Constraints on the resolution of spacetime singularities}
\author{Arvin Shahbazi-Moghaddam}
\affiliation{Leinweber Institute for Theoretical Physics, Stanford University, Stanford, CA 94305}
\affiliation{Leinweber Institute for Theoretical Physics, University of California,
 Berkeley, CA 94720}

\begin{abstract}

What happens at spacetime singularities is poorly understood. The Penrose–Wall singularity theorem constrains possible scenarios, but until recently its key assumption—the generalized second law (GSL)—had only been proven perturbatively, severely limiting this application. We highlight that recent progress enables a proof of the GSL in holographic brane-world models, valid non-perturbatively at the \emph{species scale} $cG$ (with $c$ the number of matter fields and $G$ Newton’s constant). This enables genuine constraints: an outer-trapped surface in the Einstein gravity regime implies geodesic incompleteness \emph{non-perturbatively} at the species scale. Conversely, any genuine resolution must evade Penrose’s criteria. We illustrate both possibilities with explicit examples: the classical BTZ black hole evolves to a more severe singularity, while a null singularity on the Rindler horizon is resolved, both by species-scale effects. Subject to the GSL, these constraints on singularity resolution apply beyond brane-worlds: namely, in any theory with a \emph{geometric UV scale}—roughly, where the metric remains well-defined but classical Einstein gravity breaks down.

\end{abstract}

\maketitle

\section{Introduction}

A robust signature of a spacetime singularity is the presence of an incomplete causal geodesic, i.e., a future (or past) inextendible causal curve with finite affine parameter~\cite{Wald}. For example, in the Schwarzschild spacetime, causal geodesics falling into the $r=0$ singularity are incomplete (see Fig.~\ref{fig:Sch}). A major advantage of \emph{defining} singularities via the presence of geodesic incompleteness is that the latter can be proved via singularity theorems. This is extremely powerful because the conditions of the theorems may be verified in the low curvature region, without the (impractical) need to solve for the full geometry. The Penrose singularity theorem is a primary example, and will be the main focus in this paper~\cite{Penrose:1964wq}.

Penrose's singularity theorem establishes that in a spacetime with non-compact Cauchy slices, the existence of an outer-trapped surface (i.e., a compact spacelike codimension-two surface whose outward null expansion is everywhere negative) implies the existence of at least one incomplete null geodesic. The theorem put to rest the idea that singularities are unphysical artifacts of highly non-generic solutions, at least in the regime of validity of the theorem, that is, classical Einstein gravity coupled to matter satisfying the null energy condition (NEC).\footnote{The NEC states that $T_{ij}k^{i}k^{j}\geq0$, where $T_{ij}$ is the stress-energy tensor and $k^{i}$ is any null vector. In fact, the regime of validity of Penrose's theorem is larger, and includes any spacetime which satisfies the null curvature condition, i.e., $R_{ij}k^i k^j\geq0$, where $R_{ij}$ is the Ricci tensor. The conditions are equivalent in Einstein gravity.}

This is a powerful result on the status of classical singularities. It is believed, however, that Einstein gravity is a truncation of a more complete theory of quantum gravity. And this truncation provides a bad approximation in large curvature regions where Ultraviolet (UV) effects (e.g., quantum effects of matter or gravity) become important. For instance, the NEC can be violated boundlessly in the quantum regime~\cite{Ford:1978qya}. This challenges the physical relevance of Penrose's theorem. One must either give it up when studying quantum gravity, or extend it to a broader regime of validity.

\begin{figure}[t]
    \centering
    \includegraphics[width=.9\columnwidth]{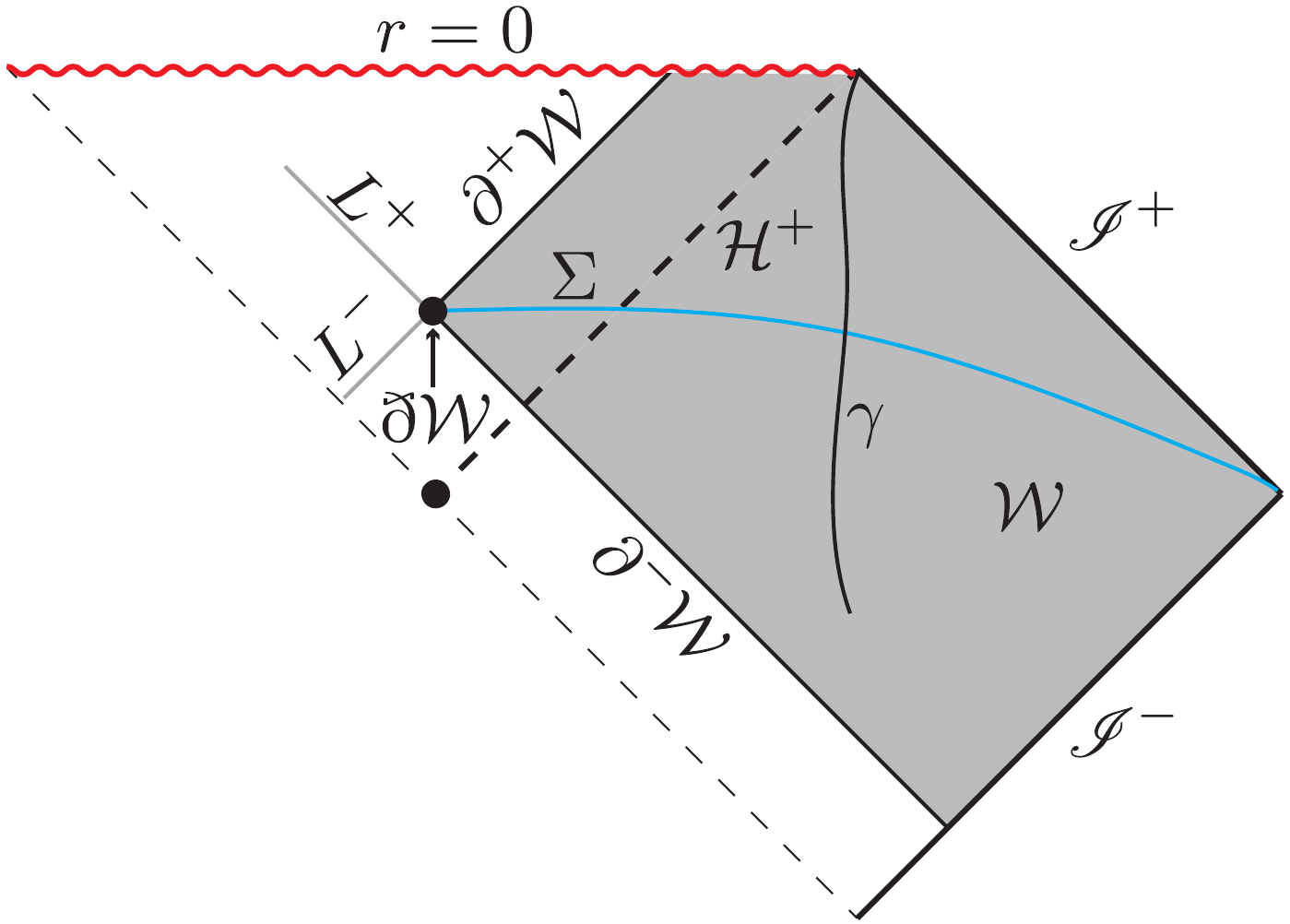}
    \caption{In the Schwarzschild spacetime, a spacetime wedge $\mathcal{W}$ is shown in grey, with a non-compact Cauchy slice $\Sigma$, and with future/past boundaries $\partial^{\pm}\mathcal{W}$. $L^\pm$ denotes portions of the boundaries of the future/past of $\mathcal{W}$. The generators of $\partial^+\mathcal{W}$ are incomplete, since they fall into the $r=0$ singularity after a finite affine parameter. A causal horizon $\mathcal{H}^+$ can be defined as $\partial I^-(\gamma)$, the boundary of the past of the future-infinite causal curve $\gamma$.}
    \label{fig:Sch}
\end{figure}

In~\cite{Wall:2010jtc}, significant progress was made in the latter direction, which can be described as follows: First, it was shown that the NEC assumption can be replaced by the (weaker) assumption of a horizon area law, namely, that cross-sectional areas of future (resp. past) causal horizons cannot decrease towards the future (resp. past)~\cite{HawEll}.\footnote{A future causal horizon is defined as the boundary of the past of any future-infinite causal curve. A past horizon can be defined in the analogous way. In a theory like Einstein gravity which enjoys time-reversal symmetry, the future and past area laws are equivalent.} This is a powerful deduction, since this feature of causal horizons is believed to be the classical gravity manifestation of a more general truth in quantum gravity. In particular, it was conjectured that it generalizes to the so-called generalized second law (GSL) in the presence of quantum effects~\cite{Bekenstein:1973ur}. Roughly speaking, the GSL states that the generalized entropy ($S_{\rm gen}$), a sort of entropy of spacetime regions, cannot decrease along causal horizons. The generalized entropy of a spacetime wedge $\mathcal{W}$ (i.e., a domain of dependence of a partial Cauchy slice) is schematically:
\begin{align}\label{eq-Sgen}
    S_{\rm gen}(\mathcal{W}) = \frac{A(\eth \mathcal{W})}{4G} + S_{\rm ren}(\mathcal{W}),
\end{align}
where $\eth \mathcal{W}$ denotes the edge of the wedge $\mathcal{W}$, $G$ denotes the renormalized Newton's constant, and $S_{\rm ren}$ denotes the renormalized von Neumann entropy of quantum fields inside $\mathcal{W}$~\cite{Bekenstein:1973ur, Susskind:1994sm}.\footnote{A widely believed feature of the generalized entropy is that it is well-defined despite the fact that matter von Neumann entropy in subregions is ill-defined. A schematic explanation for this is that the UV divergence of the von Neumann entropy renormalizes Newton's constant $G$ in the area term, resulting in a finite expression \eqref{eq-Sgen}~\cite{Susskind:1994sm}. See~\cite{Witten:2021unn} for a more modern perspective.} For instance, Hawking radiation causes the area of black hole horizons to shrink, though their generalized entropies grow due to compensation from the von Neumann entropy of the radiation, upholding the GSL~\cite{Page:1976df,Page:1983ug}.

In~\cite{C:2013uza}, it was then shown that the GSL implies a quantum generalization of Penrose's singularity theorem whereby in a spacetime with non-compact Cauchy slices, a compact (quantum) trapped surface, i.e., a spacelike codimension-two surface whose associated outward null (quantum) expansion is negative, implies the existence of a singularity (see Sec.~\ref{sec:review} for a review). We will henceforth refer to this as the Penrose-Wall singularity theorem (PW). The theorem unlocks several potential applications. We focus here on its implications regarding the resolution of spacetime singularities.

A major puzzle in quantum gravity is finding mechanisms for singularity resolution. Near a curvature singularity, one expects large corrections to classical Einstein gravity that can affect the geometry significantly. It is then a logical possibility that what appears as a singularity in a solution of Einstein gravity gives way to a region of large but finite curvature in a more complete theory. In the language of geodesic incompleteness, the Einstein gravity truncation of the theory can, in principle, predict a metric with geodesic incompleteness, even though the exact metric is geodesically complete. An application of the PW singularity theorem would rule out this scenario for singularities which are diagnosed by a (quantum) trapped surface. This does not mean that the singularity must fundamentally persist in the ultimate theory of quantum gravity, but that at least a more exotic mechanism of singularity resolution would have to occur. This would be a powerful conclusion; however, as we now explain, the statement is subject to important caveats.

The caveats we focus on concern the regime of validity of the theorem's central assumption, namely, the GSL. Until recently, the GSL was only rigorously proven in the perturbative quantum gravity regime~\cite{Wall:2011hj}.\footnote{The proof involves reducing the non-trivial case to perturbations around a stationary horizon, in which case it was argued that the GSL is equivalent to the monotonicity of relative entropy in quantum field theory on the stationary horizon, which holds completely generally.} In this regime, we start with a $d$-dimensional classical spacetime $g^{(0)}_{ij}$ on which quantum fields (including gravitons) can propagate. In the limit $G \to 0$, one then corrects the metric order-by-order in $G$ perturbation theory (schematically):
\begin{align}\label{eq-pertquant1}
    \langle g_{ij}\rangle = g^{(0)}_{ij} + \text{$\frac{\ell_P}{L}$ perturbative corrections}
\end{align}
where $\ell_P \equiv G^{\frac{1}{d-2}}$ is the Planck length scale, and $L$ denotes collectively all relevant physical scales, e.g., the curvature radius of the geometry $g^{(0)}_{ij}$, and length scales associated to the state. Due to gravitons, the metric is an operator, and the LHS of Eq.~\eqref{eq-pertquant1} must be made sense of as an expectation value.

There is a fundamental obstruction in analyzing singularity resolution within any such perturbative regime. Let us demonstrate this with a basic example.\footnote{We thank Edward Witten for emphasizing this obstruction with an example of this sort. See also footnote 21 in~\cite{C:2013uza}.} Suppose, for instance, that in a solution to Einstein gravity, the Ricci scalar $R$ as a function of proper time $t$ to the singularity is given by $R = t^{-2}$. In particular, there is a singularity at $t=0$. Now, suppose $\ell_P$ corrections change the function to $R = (t^{2}+\ell_P^2)^{-1}$. In particular, $R$ is no longer singular at $t=0$. This resolution would be invisible within the perturbation expansion, where $R = t^{-2} - \ell_P^2 t^{-4} +\cdots$, getting more severe in fact order-by-order. The lesson is that any singularity theorem confined to perturbation theory, may imply the existence of a singularity which is not there non-perturbatively. Therefore, given merely a perturbative GSL proof, one cannot use the PW reliably to constrain scenarios for singularity resolution. 

Naively, this issue would be ameliorated if only we could prove the GSL non-perturbatively in $\ell_P$, extending the regime of applicability of PW beyond perturbation theory. The theorem would then be predicting geodesic incompleteness in the metric that is determined non-perturbatively in $\ell_P$. It is far from clear, however, if the notion of a metric even makes sense at the Planck scale.

To overcome these hurdles in the application of PW, we need two ingredients. First, we need a scale of new physics at which the notion of a metric still makes sense, but it receives large corrections compared to Einstein gravity. Second, we need to prove the GSL \emph{non-perturbatively} at this new scale. We will formalize these ingredients with some definitions below, but first discuss an explicit example which motivates it.

A specific class of models where both of these ingredients are present is the brane-world holographic models (See Sec.~\ref{sec:brane} for a review). These models have a large effective number of matter fields $c$, and can be analyzed in the regime:
\begin{align}\label{eq-regime1}
    G \to 0,~~~~~~\ell_S = (c G)^{\frac{1}{d-2}} = \text{fixed}.
\end{align}
We will refer to $\ell_S$ as the species scale. In the models, the notion of a (classical) metric remains well-defined at length scales $\ell_S$, but its dynamics is no longer governed by a local description, let alone Einstein gravity. This allows for strong corrections to the metric near a singularity derived in the naive Einstein gravity truncation of the theory. Therefore, singularities are in principle resolvable by physics at the $\ell_S$ scale, making it non-trivial to constrain the scenarios in which this can occur using a singularity theorem.

Furthermore, \cite{Shahbazi-Moghaddam:2022hbw} proved a strong gravitational constraint on the evolution of the generalized entropy, called the restricted quantum focusing (rQFC) \emph{non-perturbative} in the species scale. From the rQFC, subject to mild technical assumptions, the GSL follows. This GSL proof is then \emph{non-perturbative} in the species scale. See Appendix~\ref{sec:rqfc} for details.

This explicit example encourages us to believe that the two missing ingredients discussed above might make sense more generally. Let us, then, formalize them by first defining the notion of a geometric UV scale:

\begin{definition}[Geometric UV scale]\label{eq-geoUVscale}
In a theory of gravity, $\ell$ is a geometric UV length scale if the theory has a notion of a spacetime metric $g_{ij}$ non-perturbatively in $\ell$, and such that at large length scales $L$, we can describe $g_{ij}$ with an expansion around $\ell \to 0$:
\begin{align}\label{eq-pert1}
    g_{ij} = g^{(0)}_{ij} + \text{$\frac{\ell}{L}$ perturbative corrections},
\end{align}
where $g^{(0)}_{ij}$ satisfies the Einstein field equation, coupled to classical matter fields that satisfy the NEC.
\end{definition}
For instance, in such a theory, if $g^{(0)}_{ij}$ is the Schwarzschild solution, $g_{ij}$ would then be a metric which is approximately given by $g^{(0)}_{ij}$ everywhere except near $r=0$ where the background curvature scale $L$ shrinks to zero, and therefore we inevitably exit the range of \eqref{eq-pert1}, so the metric $g_{ij}$ receives large corrections from physics at scale $\ell$. In the context of a theory with a geometric UV scale, we will refer to $g^{(0)}_{ij}$ as an Einstein gravity truncation of $g_{ij}$. We will discuss the relationship between $g_{ij}$ and $g^{(0)}_{ij}$ in more detail in subsection~\ref{eq-constrainingidea}.

Besides the brane-world example, a species scale at which effective field theory breaks down, also appears in string theory~\cite{Dvali:2010vm}. It would be interesting to study whether this or other scales directly related to the string length $\sqrt{\alpha'}$ can serve as an instance of a geometric UV scale.

Given a gravity theory with a geometric UV scale, the PW singularity theorem becomes non-trivially applicable if the following assumption holds:
\begin{assump}\label{eq-GSLassume}
In a gravity theory, with a geometric UV scale $\ell$, any causal horizon in the exact metric $g_{ij}$ satisfies the GSL non-perturbatively in $\ell$.
\end{assump}

Then, the PW singularity theorem constrains scenarios for a geometric singularity resolution, defined as follows:
\begin{definition}[Geometric singularity resolution]\label{eq-GeoSingRes}
In a gravity theory with a geometric UV scale $\ell$, suppose $g^{(0)}_{ij}$, an Einstein gravity truncation of the exact metric $g_{ij}$, is geodesically incomplete. We say that the singularity of $g^{(0)}_{ij}$ is \emph{geometrically resolved} if $g_{ij}$ does not contain any incomplete geodesics.
\end{definition}
For instance, in the Schwarzschild example, it is easy to confirm that the conditions of PW hold non-trivially. The theorem then implies that any non-perturbative in $\ell$ metric $g_{ij}$ must be null geodesically incomplete. So a geometric singularity resolution is forbidden. Similar scenarios naturally arise in many cosmological Friedmann–Robertson–Walker metrics $g^{(0)}_{ij}$, where the resolution of a big bang singularity in the exact metric $g_{ij}$ would be forbidden.

\textbf{Relation to previous work:} In~\cite{Bousso:2025xyc}, a different limitation of the PW theorem is emphasized. In particular, the paper analyzes the PW theorem in the regime~\eqref{eq-regime1}, and emphasizes that the so-called ``touching Lemma'' in~\cite{C:2013uza} (specifically, Theorem 1 in~\cite{C:2013uza}) was only proved in a perturbative regime, e.g., order-by-order in a $cG\to0$ expansion in the context of the regime~\eqref{eq-regime1}. A new singularity theorem is then proved that circumvents the Lemma, and hence its perturbative limitation. This is somewhat complementary to the goal of our paper which is, first, to emphasize the inherent limitation of analyzing singularity resolution within perturbation theory (see example under Eq.~\eqref{eq-pertquant1}), and show how this limitation carries over to the PW theorem through the regime of validity of the GSL (as it would to the theorem in~\cite{Bousso:2025xyc}). And to demonstrate that both of these limitations disappear when working within brane-world theories non-perturbatively at species scale. Lastly, to provide explicit examples of how one can use the PW theorem to constrain scenarios for singularity resolution. In our explicit examples below, the touching Lemma is only applied in a regime where it has been proved. Lastly, as an aside, and complementary to the approach of~\cite{Bousso:2025xyc}, we show in Appendix~\ref{sec:rqfc}, that the touching Lemma follows from a cross-focusing condition, proved non-perturbatively at the species scale in the brane-world theories in~\cite{Shahbazi-Moghaddam:2022hbw}, along with a conjecture in~\cite{Bousso:2024iry}.

The outline of the paper is as follows. In Sec.~\ref{sec:review} we review some basic geometric notions, the generalized entropy, and the quantum expansion, building up to the exact statement of the Penrose-Wall singularity theorem. In Subsection~\ref{eq-constrainingidea}, we then explain how within the setting of a theory with a geometric UV scale and the non-perturbative GSL assumption \eqref{eq-GSLassume}, one can use PW as a powerful constraining tool for scenarios of spacetime singularity resolution. To ground these ideas in an example, in Sec.~\ref{sec:brane} we review the holographic brane-world models where a geometric UV scale (the species scale) exists and the GSL was proven to hold non-perturbatively at that scale (see Appendix~\ref{sec:rqfc}). In Sec.~\ref{sec:examples}, we then provide explicit examples of singularity resolution and non-resolution in the brane-world model in $d=3$. In particular, as predictable by PW, the singularity of the (non-rotating) BTZ geometry is shown to persist (and in fact gets more severe) when non-perturbative metric corrections from the species scale are accounted for. We also demonstrate an example of a singularity on the Rindler horizon which \emph{is} resolved by the UV scale. We show that, as required by consistency with PW, this singularity could not have been diagnosed by a trapped surface. We end with a discussion of some future directions in Sec.~\ref{sec:disc}.

\section{Review of the singularity theorem}\label{sec:review}

In this section, we will introduce the basic ingredients of the Penrose-Wall theorem, building up to its statement and a proof sketch. In many places, we diverge from the original presentation of the theorem. For that, we refer the reader to~\cite{C:2013uza}. In subsection~\ref{eq-constrainingidea}, we discuss how in a theory with a geometric UV scale (see Definition~\ref{eq-geoUVscale}), subject to Assumption~\ref{eq-GSLassume}, the PW theorem can be used to meaningfully constrain scenarios for singularity resolution.

\subsection{The geometric ingredients}

Let $(M,g_{ij})$ denote a spacetime manifold $M$ with metric $g_{ij}$. A spacetime wedge $\mathcal{W} \subseteq M$, is the domain of dependence of an achronal codimension-$1$ submanifold $\Sigma$ in $M$ (see Fig.~\ref{fig:Sch}):
\begin{align}
    \mathcal{W} = D(\Sigma).
\end{align}
Then, $\Sigma$ is a Cauchy surface for $\mathcal{W}$. Throughout, we take $\mathcal{W}$ to be an open set.

The boundary of $\mathcal{W}$, denoted by $\partial \mathcal{W}$, has a future and a past piece, defined by:
\begin{align}
    \partial^{+}\mathcal{W} &= \partial\mathcal{W} \cap I^+(\mathcal{W})\\
    \partial^{-}\mathcal{W} &= \partial\mathcal{W} \cap I^-(\mathcal{W}).
\end{align}
where $I^+(S)$ (resp. $I^-(S)$) for some set $S$ denotes the set of points in $M$ which can be reached by a future- (resp. past-) directed timelike curve starting from any point in $S$.

The edge of the wedge, denoted by $\eth \mathcal{W}$, is defined by:
\begin{align}\label{eq-9}
    \eth \mathcal{W} = \partial\mathcal{W} - I(\mathcal{W}),
\end{align}
where $I(S) = I^+(S)\cup I^-(S)$.

The set $\partial^{+}\mathcal{W}$ is a null hypersurface generated by future-directed null geodesics emanating orthogonally from $\eth \mathcal{W}$, and terminated at caustics or self-intersections~\cite{Akers:2017nrr}. Each generator of $\partial^{+}\mathcal{W}$, affinely parameterized with $\lambda=0$ at $\eth \mathcal{W}$, falls into one of the following classes:
\begin{enumerate}
    \item \emph{Future-extendible:} the generator remains on $\partial^{+}\mathcal{W}$ for $\lambda \in [0,\lambda_{\mathrm{end}}]$ with $\lambda_{\mathrm{end}}<\infty$, and then leaves it (at a caustic or self-intersection).
    \item \emph{Future-complete:} the generator has infinite affine extent, i.e.\ it is defined on $[0,\infty)$.
    \item \emph{Future-incomplete:} the generator has finite affine length, i.e.\ it is defined on an interval $[0,\lambda_{*})$ with $0<\lambda_{*}<\infty$ but admits no extension to larger $\lambda$ within $M$.

\end{enumerate}
Topologically, these cases correspond (after affine rescaling) to $[0,1)$, $[0,\infty)$, and $[0,1]$, respectively. These possibilities play an important role both in the Penrose-Wall theorem and in Penrose’s classical singularity theorem.

\subsection{The quantum expansion}

It is believed that in a gravitational theory, one can assign a generalized entropy to any spacetime wedge~\cite{C:2013uza, Myers:2013lva, Bousso:2015mna, Susskind:1994sm}. The generalized entropy is a functional $S_{\rm gen}: \mathcal{W} \to \mathbb{R}$, which in a perturbative regime \eqref{eq-pert1}, is well-approximated by $A(\eth \mathcal{W})/4G$, where $A$ denotes the area.

Both the PW theorem and the GSL are naturally describable in terms of properties of spacetime wedges pertaining to how their generalized entropies vary as we deform them via translating their edges along, say, the boundary of their past (or future). This property has been historically described by a quantum expansion functional $\Theta:\eth\mathcal{W} \to \mathbb{R}$ (see Eq.~\eqref{eq-Thetatheta}). In our presentation, we stick to the modern axiomatic structure developed in~\cite{Bousso:2024iry}, which in particular shows that $\Theta$ can only be ascribed a sign in general, and not a numerical value:\footnote{This is in particular because a numerical value is ill-defined when $\eth\mathcal{W}$ has kinks. Such kinks arise generically in the application of the GSL, singularity theorems, etc.}
\begin{definition}[$\Theta^+\rvert_{\mathcal{W}}(p)<0, \Theta^+\rvert_{\mathcal{W}}(p)\leq 0$]\label{eq-Thetasign}
Let $\mathcal{W}$ be a wedge with edge $\eth \mathcal{W}$, and $L^+ = \partial I^+(\mathcal{W})$, i.e., the boundary of the future of $\mathcal{W}$. We say that $\Theta^+\rvert_{\mathcal{W}}(p)<0$ for a point $p \in \eth \mathcal{W}$, if in any neighborhood of $p$, there exists a future-directed deformation of $\mathcal{W}$ along $L^+$ that decreases $S_{\rm gen}$.

Furthermore, we say that $\Theta^+\rvert_{\mathcal{W}}(p)\leq 0$ for a point $p \in \eth \mathcal{W}$, if there exists a neighborhood of $p$ in which all future-directed deformation of $\mathcal{W}$ along $L^+$ decreases or does not change $S_{\rm gen}$ (see Fig.~\ref{fig:deformed}).
\end{definition}
The definition above can be modified in the obvious way to define $\Theta^-\rvert_{\mathcal{W}}(p)$, by replacing $L^+$ with $L^- = \partial I^-(\mathcal{W})$, the boundary of the past of $\mathcal{W}$ (see Fig.~\ref{fig:Sch}).

We write $\Theta^+\rvert_{\mathcal{W}}<0$ (and similarly for other signs) to indicate that $\Theta^+\rvert_{\mathcal{W}}(p)<0$ for all $p\in\eth\mathcal{W}$.

In any smooth neighborhood of $\eth \mathcal{W}$, the quantum expansion can be assigned a numerical value, i.e., $\Theta^{\pm}\rvert_{\mathcal{W}}:\eth \mathcal{W} \to \mathbb{R}$, via a functional derivative of $S_{\rm gen}$ under null deformations of $\eth \mathcal{W}$ along the generators of $L^\pm$. For instance, take $L^-$. Let $(v,y^a)$ be coordinates on $L^-$, where $\partial_v$ are past-directed affine generators of $L^-$, such that $v=0$ at $\eth \mathcal{W}$, and $y^a$ (with $a=1,\cdots,d-2$) label the transverse directions. Then, each function $v=V(y^a)$ determines a wedge obtained from $\mathcal{W}$ by deforming $\eth\mathcal{W}$ from $v=0$ to $V(y^a)$. The quantum expansion is then:
\begin{align}\label{eq-Qexpansion}
    \Theta^-\rvert_{\mathcal{W}}(y^a) = \frac{4G}{\sqrt{h_V}} \left.\frac{\delta S_{\rm gen}}{\delta V(y^a)}\right\rvert_{V(y^a)=0}
\end{align}
where $h_V$ denotes the determinant of the induced metric of the edge of the wedge $v=V(y^a)$. A numerical value for $\Theta^+\rvert_{\mathcal{W}}(y^a)$ can be assigned in the obvious analogous way.

It is easy to see that for $p$ in a smooth neighborhood of $\eth \mathcal{W}$, the various signs of $\Theta^{\pm}\rvert_{\mathcal{W}}(p)$ according to Definition~\ref{eq-Thetasign} coincide with the corresponding signs of the numerical value.

In a theory with a geometric UV scale, let $\mathcal{W}$ be a wedge with a smooth edge $\eth \mathcal{W}$, and with at least one of its Cauchy slices in the perturbative regime \eqref{eq-pert1}. Then, we can obtain a perturbative expansion for the generalized entropy, with $A(\eth \mathcal{W})/4G$ as its leading term. Therefore, at any smooth point $p\in \eth \mathcal{W}$, we obtain:
\begin{align}\label{eq-Thetatheta}
    \left.\Theta^{\pm}\right\rvert_{\mathcal{W}}(p) = \left.\theta^{\pm}\right\rvert_\mathcal{W}(p) + \text{$\frac{\ell}{L}$ perturbative corrections},
\end{align}
where $\left.\theta^{\pm}\right\rvert_\mathcal{W}$ denotes the classical expansion of the hypersurfaces $L^{\pm}$ at $p$. In the above setting, it is explicitly evaluated as follows:
\begin{align}
    \left.\theta^{-}\right\rvert_\mathcal{W}(p) = \left.\frac{\delta}{\delta V(y^a)}\log \sqrt{h_V}\right\rvert_{V(y)=0}.
\end{align}
where $y^a$ determines $p$. 
\begin{figure}[t]
    \centering
    \includegraphics[width=.7\columnwidth]{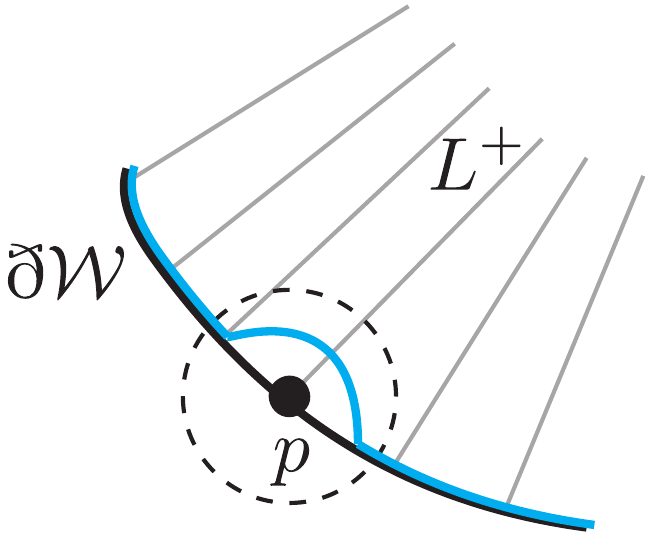}
    \caption{A spacetime wedge can be deformed by moving its $\eth \mathcal{W}$ (from the black line to the blue line) along the boundary of its future $L^+$. If there exists a neighborhood of a point $p \in \eth \mathcal{W}$, inside which any such deformation along $L^+$ decreases the generalized entropy of the wedge or leaves it constant, we say $\Theta^+\rvert_{\mathcal{W}}(p)\leq0$. The sign $\Theta^+\rvert_{\mathcal{W}}(p)<0$ can be defined in a similar way.}
    \label{fig:deformed}
\end{figure}

\subsection{The generalized second law}

We will now discuss the GSL which is the central assumption of the Penrose-Wall singularity theory. Let us first define more precisely the notion of the causal horizon. 

\begin{definition}[Causal Horizon]
In a globally hyperbolic spacetime, given any future-complete (infinite) causal curve $\gamma$, we define a future causal horizon $\mathcal{H}^+$ as the boundary of its past, i.e.,
\begin{align}
    \mathcal{H}^+ \equiv \partial I^-(\gamma).
\end{align}
\end{definition}
The causal horizon $\mathcal{H}^+$ is an achronal null hypersurface. Intuitively, it is the boundary of the region of spacetime that an infinitely long-lived observer sees. This can be a black hole horizon, a cosmological horizon (e.g., in a de Sitter universe), or the Rindler horizon in Minkowski spacetime.

\begin{definition}[Exterior of $\mathcal{H}^+$]\label{eq-exterior}
In a globally hyperbolic spacetime with a future causal horizon $\mathcal{H}^+$, given any (global) Cauchy slice $\Sigma_{\text{global}}$, we call any wedge $\mathcal{W} = D(\Sigma_{\text{global}} \cap I^-(\gamma))$ an exterior of $\mathcal{H}^+$.
\end{definition}

\begin{definition}[Generalized Second Law]\label{eq-GSL}
In a globally hyperbolic spacetime with a future causal horizon $\mathcal{H}^+$, given two exteriors $\mathcal{W}_1$ and $\mathcal{W}_2$ such that $\mathcal{W}_2 \subseteq \mathcal{W}_1$, the GSL states that
\begin{align}
    S_{\rm gen} (\mathcal{W}_2) \geq S_{\rm gen} (\mathcal{W}_1).
\end{align}
\end{definition}
An immediate implication of the GSL is that no exterior wedge $\mathcal{W}$ can satisfy $\Theta^-\rvert_{\mathcal{W}}>0$. In the classical Einstein gravity regime, the GSL reduces to the statement that the cross-sections of $\mathcal{H}^+$ cannot increase in area towards the past.

\subsection{The singularity theorem}

Before stating the PW theorem, we will present a useful lemma. A stronger result was proven within perturbation theory in~\cite{C:2013uza} (Theorem $1$ in the reference).

\begin{lem}\label{eq-tch}
Let $\mathcal{W} \supseteq \tilde{\mathcal{W}}$, such that there is a point $p\in \eth\mathcal{W} \cap \eth\mathcal{\tilde{W}}$. Then, $\Theta^-\rvert_{\mathcal{W}}(p)>0 \implies \Theta^-\rvert_{\tilde{\mathcal{W}}}(p)>0$.
\end{lem}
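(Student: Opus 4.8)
The plan is to argue directly from the sign-based Definition~\ref{eq-Thetasign}, exploiting the fact that both the hypothesis and the conclusion are statements about deformations supported in \emph{arbitrarily small} neighborhoods of $p$. This lets me work entirely locally near $p$ and, crucially, with signs rather than numerical expansions, so that possible kinks in $\eth\mathcal{W}$ or $\eth\tilde{\mathcal{W}}$ away from $p$ never enter. The first step is to record the geometric input forced by the hypotheses: since $\tilde{\mathcal{W}}\subseteq\mathcal{W}$ we have $I^-(\tilde{\mathcal{W}})\subseteq I^-(\mathcal{W})$, so $L^-_{\tilde{\mathcal{W}}}=\partial I^-(\tilde{\mathcal{W}})$ lies weakly to the future of $L^-_{\mathcal{W}}=\partial I^-(\mathcal{W})$; because $p\in\eth\mathcal{W}\cap\eth\tilde{\mathcal{W}}$ lies on both, the two past horizons are \emph{tangent} at $p$ (they share the past-outgoing null generator direction there), with $\eth\tilde{\mathcal{W}}$ touching $\eth\mathcal{W}$ from inside. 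Intuitively, the inner surface is ``more convex'' toward the common null normal, which is the source of the claimed inequality.

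To transfer positivity from $\mathcal{W}$ to $\tilde{\mathcal{W}}$, I fix an arbitrary neighborhood $N$ of $p$ and use $\Theta^-\rvert_{\mathcal{W}}(p)>0$ to select a past-directed deformation of $\mathcal{W}$ along $L^-_{\mathcal{W}}$, supported in $N$, producing a wedge $\mathcal{W}'\supseteq\mathcal{W}$ with $S_{\rm gen}(\mathcal{W}')>S_{\rm gen}(\mathcal{W})$. I then construct a companion past-directed deformation $\tilde{\mathcal{W}}'\supseteq\tilde{\mathcal{W}}$ of the inner wedge along $L^-_{\tilde{\mathcal{W}}}$, also supported in $N$, chosen so that the two deformed configurations fit together as
\begin{align}\label{eq-ssa-pairing}
\mathcal{W}\cup\tilde{\mathcal{W}}'=\mathcal{W}',\qquad \mathcal{W}\cap\tilde{\mathcal{W}}'=\tilde{\mathcal{W}}.
\end{align}
Applying strong subadditivity of the generalized entropy to the pair $\{\mathcal{W},\tilde{\mathcal{W}}'\}$ and using \eqref{eq-ssa-pairing} gives
\begin{align}
S_{\rm gen}(\tilde{\mathcal{W}}')-S_{\rm gen}(\tilde{\mathcal{W}})\;\geq\;S_{\rm gen}(\mathcal{W}')-S_{\rm gen}(\mathcal{W})\;>\;0,
\end{align}
so $\tilde{\mathcal{W}}'$ is a past-directed deformation of $\tilde{\mathcal{W}}$ in $N$ that strictly increases $S_{\rm gen}$. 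Since $N$ was arbitrary, Definition~\ref{eq-Thetasign} yields $\Theta^-\rvert_{\tilde{\mathcal{W}}}(p)>0$, as desired. (Equivalently, reading the contrapositive: if even the inner wedge, which carries the larger expansion, has $\Theta^-\leq 0$, then so must the outer one, whose expansion is smaller.)

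The main obstacle is to make \eqref{eq-ssa-pairing} precise and to justify the entropy inequality non-perturbatively. Because $L^-_{\mathcal{W}}$ and $L^-_{\tilde{\mathcal{W}}}$ coincide only at $p$ and separate away from it, the union and intersection of $\mathcal{W}$ and $\tilde{\mathcal{W}}'$ must be shown to again be wedges whose edges are the pointwise future/past envelopes of the two deformed edges, so that the area terms in $S_{\rm gen}$ balance exactly and the inequality collapses onto (strong) subadditivity of the renormalized matter entropy $S_{\rm ren}$. Controlling this across two distinct tangent horizons---and doing so using only the sign data of $\Theta^-$---is the delicate geometric point. At the analytic level, the required subadditivity/monotonicity of $S_{\rm gen}$ is exactly the non-elementary ingredient: classically it is the focusing comparison of second fundamental forms at $p$, whereas non-perturbatively (at the species scale) it is underwritten by the cross-focusing condition proved in~\cite{Shahbazi-Moghaddam:2022hbw} together with the conjecture of~\cite{Bousso:2024iry}, as developed in Appendix~\ref{sec:rqfc}. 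This is where I expect the real work to lie; the localization and the subadditivity bookkeeping above are comparatively routine once \eqref{eq-ssa-pairing} is established.
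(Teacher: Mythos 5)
Your strategy---localize near $p$, use the sign-based Definition~\ref{eq-Thetasign}, and transfer an entropy-increasing deformation from $\mathcal{W}$ to $\tilde{\mathcal{W}}$ via strong subadditivity---is sound \emph{only} in the special case where $\eth\mathcal{W}$ and $\eth\tilde{\mathcal{W}}$ coincide throughout a neighborhood of $p$. In the generic case the Lemma is about, where the edges merely touch at $p$, your central pairing $\mathcal{W}\cup\tilde{\mathcal{W}}'=\mathcal{W}'$, $\mathcal{W}\cap\tilde{\mathcal{W}}'=\tilde{\mathcal{W}}$ is geometrically impossible, and the failure is tied to an orientation error in your first paragraph. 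Since $I^-(\tilde{\mathcal{W}})\subseteq I^-(\mathcal{W})$, the hypersurface $L^-_{\tilde{\mathcal{W}}}=\partial I^-(\tilde{\mathcal{W}})$ is contained in $\overline{I^-(\mathcal{W})}$, i.e.\ it lies weakly to the \emph{past} of $L^-_{\mathcal{W}}$, not to the future as you assert; the two are tangent along the common generator through $p$ and separate (generically quadratically) away from it. Consequently the region $\mathcal{W}'\setminus\mathcal{W}$ added by the given deformation of $\mathcal{W}$ hugs $L^-_{\mathcal{W}}$ from below and contains points lying strictly \emph{between} the two hypersurfaces, hence outside $\overline{I^-(\tilde{\mathcal{W}})}$. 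No deformation of $\tilde{\mathcal{W}}$ along $L^-_{\tilde{\mathcal{W}}}$ supported near $p$ can contain such points: through any such point there are inextendible causal curves (e.g.\ past-directed rays escaping outward within the gap, crossing $L^-_{\tilde{\mathcal{W}}}$ beyond the support of the deformation) that never meet a Cauchy slice of $\tilde{\mathcal{W}}$ or the deformed null piece, so these points are not in the relevant domain of dependence. Hence $\mathcal{W}\cup\tilde{\mathcal{W}}'\subsetneq\mathcal{W}'$ for \emph{every} admissible $\tilde{\mathcal{W}}'$ (and, symmetrically, the region added to $\tilde{\mathcal{W}}$ generically re-enters $\mathcal{W}$, spoiling the intersection identity as well), so the chain $S_{\rm gen}(\tilde{\mathcal{W}}')-S_{\rm gen}(\tilde{\mathcal{W}})\geq S_{\rm gen}(\mathcal{W}')-S_{\rm gen}(\mathcal{W})$ never gets off the ground. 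Note also that you cannot evade this by choosing $\mathcal{W}'$ cleverly: the hypothesis only grants existence of \emph{some} $S_{\rm gen}$-increasing deformation in each neighborhood, so your construction must work for an arbitrary one.

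This tangent-versus-coincident distinction is not a deferrable technicality; it is the entire content of the Lemma, and it is precisely what the paper's argument in Appendix~\ref{sec:rqfc} is structured around. There, one first interposes a wedge $\mathcal{W}_b$ with $\tilde{\mathcal{W}}\subseteq\mathcal{W}_b\subseteq\mathcal{W}$ whose edge \emph{coincides} with $\eth\tilde{\mathcal{W}}$ near $p$, and uses the cross-focusing condition (Condition~\ref{eq-2v5gs}), combined with the rQFC (Condition~\ref{eq-rQFCcond}) and the genericity Assumption~\ref{eq-generic}, to propagate $\Theta^-\rvert_{\mathcal{W}}(p)>0$ to $\Theta^-\rvert_{\mathcal{W}_b}(p)>0$; only \emph{after} this reduction does the strong-subadditivity step (Conjecture 23 of~\cite{Bousso:2024iry}) finish the argument---and that step is essentially your pairing, which does work there because $L^-_{\mathcal{W}_b}$ and $L^-_{\tilde{\mathcal{W}}}$ locally agree, so a single null piece deforms both wedges consistently. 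In your write-up, cross-focusing appears only in the closing paragraph as an alternative way to ``underwrite'' the same subadditivity inequality; in fact it must do separate and prior work---bridging tangency to coincidence---that no amount of SSA bookkeeping can replace. Your final inequality and the localization logic are correct, so your argument would complete the proof once that reduction step is supplied; as written, however, the proposal has a genuine gap at its key step.
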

In classical Einstein gravity, the Lemma~\ref{eq-tch} is a consequence of the fact that, roughly speaking, the boundary of the past of $\mathcal{\tilde{W}}$ is more concave than that of $\mathcal{W}$. (see Fig.~\ref{fig:lemmafig}).

\begin{figure}[t]
    \centering
    \includegraphics[width=.65\columnwidth]{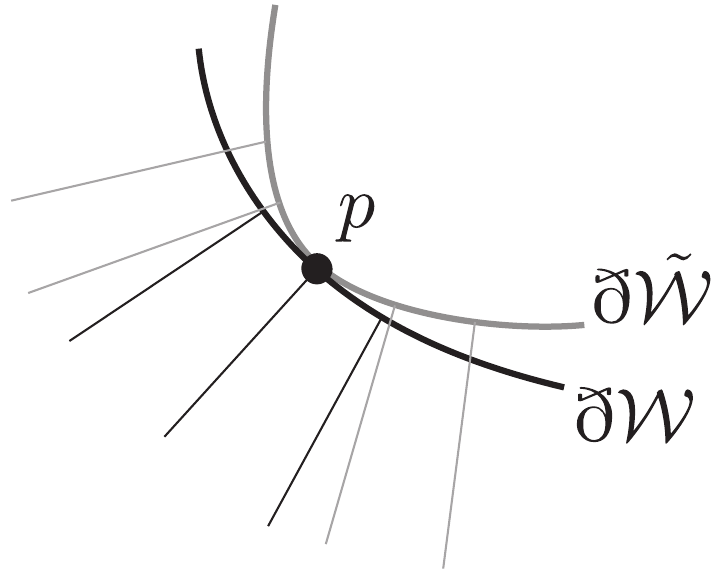}
    \caption{A cross section of time in the scenario discussed in Lemma~\ref{eq-tch}. Given $\mathcal{\tilde{W}}\subseteq \mathcal{W}$, their edges $\eth\mathcal{W}$ and $\eth\mathcal{\tilde{W}}$ intersect at $p$. In such a scenario, the quantum expansion of $\mathcal{W}$ in the towards the boundary of its past (along black rays) is positive, then so is that of $\mathcal{\tilde{W}}$ (along grey rays)}
    \label{fig:lemmafig}
\end{figure}

The proof of Theorem 1 of~\cite{C:2013uza} subsumes Lemma~\ref{eq-tch}. In the context of a geometric UV scale, the proof strategy applies order-by-order in perturbation theory around $\ell\to0$. In~\cite{Bousso:2025xyc}, this perturbative limitation was pointed out, and a modified singularity theorem was proved which circumvents this step. Here, we simply assume Lemma~\ref{eq-tch}, and follow the PW singularity theorem, for two reasons. First, for our purposes of constraining scenarios of singularity resolution, we are content to use Lemma \ref{eq-tch} in the regime where it is proved. This will be explained in more detail in subsection~\ref{eq-constrainingidea}. Second, in Appendix~\ref{sec:rqfc} we take a stride towards proving the Lemma~\ref{eq-tch} non-perturbatively in the species scale of the brane-world theory. This is done by reducing it to a statement which is implied by a strong subadditivity conjecture in~\cite{Bousso:2024iry} (Conjecture 23 in the reference).

Let us finally state the main theorem:

\begin{thm}[Penrose-Wall Singularity Theorem]\label{eq-03c1}
In a spacetime $(M,g_{ij})$, let $\mathcal{W}$ be a spacetime wedge with a compact edge $\eth \mathcal{W}$ and a non-compact Cauchy slice $\Sigma$. Suppose that $\Theta^-|_{\mathcal{W}}>0$. Then, at least one generator of $\partial^+ \mathcal{W}$ is incomplete in $M$.
\end{thm}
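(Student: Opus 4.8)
The plan is to argue by contradiction, following Penrose's classical strategy but with the quantum expansion $\Theta^-$ playing the role of the classical $\theta$. Suppose, for contradiction, that every generator of $\partial^+\mathcal{W}$ is complete in $M$ (i.e., each is either future-extendible or future-complete in the classification of Sec.~\ref{sec:review}). I would then show that $\partial^+\mathcal{W}$, being generated by these orthogonal null geodesics fired from the compact edge $\eth\mathcal{W}$, is a \emph{compact} achronal hypersurface (without boundary). The intuition is that future-extendible generators terminate at caustics or self-intersections after finite affine parameter, and future-complete ones can likewise be compactified, so the image $\partial^+\mathcal{W}$ is a closed set that is the continuous image of a compact parameter space; achronality and the absence of incomplete generators force it to have no edge. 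This is the analog of the step in Penrose's theorem where one produces a compact Cauchy horizon.

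The second step is to derive a contradiction with the non-compactness of the global Cauchy slice $\Sigma$. Here I would use that a compact achronal null hypersurface projects, along a timelike reference vector field, homeomorphically onto a subset of a Cauchy slice; compactness of $\partial^+\mathcal{W}$ then exhibits a compact open subset of $\Sigma$ (the projection is open because $\partial^+\mathcal{W}$ is a codimension-one achronal set with no edge), and since $\Sigma$ is connected and non-compact, a nonempty compact-and-open subset is impossible. This contradiction shows that the assumption was false, so at least one generator must be future-incomplete, which is exactly the desired conclusion.

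The role of the hypothesis $\Theta^-|_{\mathcal{W}}>0$, together with the GSL, is what makes the first step go through: it is needed to guarantee that every generator actually leaves $\partial^+\mathcal{W}$ in finite affine time (rather than ranging over $[0,\infty)$ while staying on the boundary), i.e., to rule out the future-complete case and force the compactification. Concretely, I expect the argument to run as follows. By the GSL (Definition~\ref{eq-GSL}), no exterior wedge of a future causal horizon can have $\Theta^-$ positive; equivalently, if one could build a future-infinite causal curve $\gamma$ whose horizon $\mathcal{H}^+=\partial I^-(\gamma)$ coincides near $p$ with $\partial^+\mathcal{W}$, then the strict positivity $\Theta^-|_{\mathcal{W}}(p)>0$ would violate the GSL. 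The contrapositive is that a complete generator of $\partial^+\mathcal{W}$ would allow the construction of such a future-infinite $\gamma$ lying just outside $\mathcal{W}$, generating a causal horizon against which the GSL can be applied. Lemma~\ref{eq-tch} is the technical tool that transports the strict sign $\Theta^-|_{\mathcal{W}}(p)>0$ onto the relevant comparison wedge $\tilde{\mathcal{W}}$ (the exterior of that horizon) sharing the point $p$ on its edge, yielding $\Theta^-|_{\tilde{\mathcal{W}}}(p)>0$ and hence the GSL violation.

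The main obstacle I anticipate is the careful construction of the future-infinite causal curve $\gamma$ and the associated exterior wedge $\tilde{\mathcal{W}}$ from a hypothetical complete generator, in a way that (i) genuinely produces a causal horizon to which the GSL of Definition~\ref{eq-GSL} applies, and (ii) lets Lemma~\ref{eq-tch} deliver the strict inequality $\Theta^-|_{\tilde{\mathcal{W}}}(p)>0$ at a shared edge point. The subtlety is that $\Theta$ is only assigned a \emph{sign}, not a value, so the whole argument must be phrased in terms of the deformation-based definitions of Definition~\ref{eq-Thetasign} rather than any pointwise focusing computation; I must ensure the global-hyperbolicity and compactness hypotheses are used precisely enough that the deformation producing a $S_{\rm gen}$ decrease near $p$ is incompatible with $\tilde{\mathcal{W}}$ being an exterior obeying the GSL. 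Handling the kinks that generically appear on $\eth\tilde{\mathcal{W}}$, and confirming that the completeness assumption is what is actually contradicted, is where the real work lies.
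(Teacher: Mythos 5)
Your proposal is correct and follows essentially the same route as the paper's proof: a hypothetical future-complete generator $\gamma$ is ruled out by forming the causal horizon $\mathcal{H}^+ = \partial I^-(\gamma)$ and its exterior wedge $\tilde{\mathcal{W}} = D(I^-(\gamma)\cap\Sigma)$, transporting $\Theta^-|_{\mathcal{W}}>0$ to $\tilde{\mathcal{W}}$ via Lemma~\ref{eq-tch} at a shared edge point, and contradicting the GSL; the remaining case, in which every generator is future-extendible, makes $\partial^+\mathcal{W}$ compact (topologically $\eth\mathcal{W}\times[0,1]$) and yields Penrose's topological contradiction with the non-compact $\Sigma$. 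One caution: your first paragraph's suggestion that future-complete generators ``can likewise be compactified'' is false as stated---compactness genuinely requires all generators to leave $\partial^+\mathcal{W}$ at finite affine parameter---but your third paragraph correctly repairs this by letting the GSL eliminate the future-complete case, exactly as the paper does (where $\gamma$ is the complete generator itself, not a curve lying just outside $\mathcal{W}$).
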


\begin{proof}
(Sketch of the proof, see~\cite{C:2013uza} for details) Suppose that no generator of $\partial^+\mathcal{W}$ is incomplete. This can happen (see under Eq.~\eqref{eq-9}) if either all of the generators of $\partial^+\mathcal{W}$ are future-extendible, or at least one of the generators is future-complete (infinite). In the former case, the topology of $\partial^+\mathcal{W}$ is that of $\eth \mathcal{W} \times [0,1]$, which is compact since $\eth \mathcal{W}$ is compact. As shown by Penrose~\cite{Penrose:1964wq}, this contradicts the non-compactness of $\Sigma$, and therefore cannot be the case. Suppose, then, that a generator $\gamma\in\partial^+\mathcal{W}$ is future-complete. Then, $\mathcal{H}^+ = \partial I^-(\gamma)$ is a causal horizon. Let $\mathcal{\tilde{W}} = D(I^-(\gamma) \cap \Sigma)$. It follows from Lemma~\ref{eq-tch}, and the condition $\Theta^-|_{\mathcal{W}}>0$, that $\Theta^-\rvert_{\mathcal{\tilde{W}}}>0$. This is in contradiction with the GSL. Therefore, at least one generator of $\partial^+ \mathcal{W}$ is incomplete.
\end{proof}

The codimension-two surface $\eth \mathcal{W}$ in Theorem~\ref{eq-03c1} was called a quantum (outer) trapped surface in~\cite{C:2013uza}. In the classical limit (i.e., in Einstein gravity coupled to matter satisfying the NEC), where we also get to replace the condition $\Theta^-|_{\mathcal{W}}>0$ with $\theta^-|_{\mathcal{W}}>0$, we obtain the original Penrose singularity theorem~\cite{Penrose:1964wq} for an (outer) trapped surface $\eth \mathcal{W}$.

\subsection{Constraining scenarios for singularity resolution}\label{eq-constrainingidea}

In general, Theorem~\ref{eq-03c1} must be viewed as an exact theorem applicable to the exact metric $g_{ij}$.\footnote{This view is subject to assuming both the GSL and Lemma \ref{eq-tch} non-perturbatively in the geometric UV scale. See~\cite{Bousso:2025xyc}, for an alternative singularity theorem which does not rely on Lemma \ref{eq-tch}.} Such an exact singularity theorem ought to have many interesting applications. Here, we outline a very specific kind of application pertaining to the question of singularity resolution. Specifically, we will explain how theorem~\ref{eq-03c1}, when applied to a gravity theory with a geometric UV scale (see Definition~\ref{eq-geoUVscale}), and satisfying the GSL assumption~\ref{eq-GSLassume}, severely constrains scenarios for a geometric singularity resolution (as in Definition~\ref{eq-GeoSingRes}).

Quite broadly, we are in situations where we know an Einstein gravity truncation of the metric $g^{(0)}_{ij}$, but determining the exact metric $g_{ij}$ is not feasible. Now, suppose $g^{(0)}_{ij}$ contains a singularity. A question of singularity resolution then involves determining what happens to the singularity in the exact metric $g_{ij}$. For instance, we can wonder whether it goes away, stays the same, or change significantly in some way?

Therefore, the starting point in this application is geodesic incompleteness in $g^{(0)}_{ij}$, an Einstein gravity truncation of some exact metric $g_{ij}$. Looking at Eq.~\eqref{eq-pert1}, we learn that the metric $g^{(0)}_{ij}$ can be obtained as an $\ell \to 0$ limit of the exact metric $g_{ij}$. Let us remark here that there is, in general, no unique way to take such a limit.\footnote{We thank Raphael Bousso for discussion on this point.} For example, in the context of the species scale in $d=4$, with $G_4 \to 0$, and $cG_4$ fixed, say the exact metric is prepared by sending in a spherically symmetric shell of matter with mass $M_{\text{total}} = c M_1$, where $M_1$ is the contribution per species. We can take the $cG_4 \to 0$ limit in at least two different ways: first, by sending the dimensionless ratio $cG_4 / A \to 0$, where $A \sim c^2 G_4^2 M_1^2$ is the area of the black hole. The resulting $g^{(0)}_{ij}$ is a black hole formed by collapse, approaching the Schwarzschild metric in the future and such that the area of the black hole is infinitely large compared to the species scale, making the perturbative analysis in $cG_4$ a good approximation. Second, we can send the dimensionless quantity $cG_4 M_1^2 \to 0$ . The result is flat space with a shell of total mass $M_{\text{total}}$ which does not backreact on the geometry.

The non-uniqueness of $g^{(0)}_{ij}$ is not an obstruction to our discussion of singularity resolution, since the procedure we outline below works for any $g^{(0)}_{ij}$, which is an Einstein gravity truncation of $g_{ij}$, i.e., results from some $\ell \to 0$ limit of an exact metric $g_{ij}$. Suppose such a $g^{(0)}_{ij}$ contains a wedge $\mathcal{W}^{(0)}$ with a non-compact Cauchy slice $\Sigma^{(0)}$, and a compact $\eth \mathcal{W}^{(0)}$ satisfying $\theta^-\rvert_{\mathcal{W}^{(0)}} >0$. Then, by Penrose's classical theorem, $\partial^+\mathcal{W}^{(0)}$ has an incomplete null geodesic in $g^{(0)}_{ij}$. This is a singularity whose resolution we can constrain. 

Suppose that $\Sigma^{(0)}$ is fully contained in a neighborhood of $g^{(0)}_{ij}$ with upper-bounded magnitude of curvature and stress-energy tensor. Since $g^{(0)}_{ij}$ is an $\ell \to 0$ limit of an exact metric $g_{ij}$, any such $\Sigma^{(0)}$ lives in the perturbative regime~\eqref{eq-pert1}. Therefore, $\Sigma^{(0)}$ must have an uplift to a partial Cauchy slice of the exact metric $g_{ij}$, retaining its topological properties (since they are stable under small perturbations). That is, there must exist a partial Cauchy slice $\Sigma$ in spacetime $g_{ij}$ with an $\ell \to 0$ limit that recovers $\Sigma^{(0)}$ in $g^{(0)}_{ij}$. Let us now define $\mathcal{W} = D(\Sigma)$. From Eq.~\eqref{eq-Thetatheta}, it follows that $\Theta^-|_{\mathcal{W}}>0$. The wedge $\mathcal{W}$ then satisfies all of the conditions of the PW singularity theorem. Intuitively, by only investigating the low-curvature region, and without having to solve for or even know the detailed dynamics of the exact theory, we diagnose that a geodesic incompleteness present in $g^{(0)}_{ij}$ cannot be geometrically resolved in the exact metric $g_{ij}$. Equivalently, in any example where $g_{ij}$ does geometrically resolve a singularity in $g^{(0)}_{ij}$, the above diagnostic must be evaded.

Note that since $\mathcal{W}$ has a Cauchy slice in the perturbative regime, so does $\tilde{\mathcal{W}}$ in the proof of Theorem~\ref{eq-03c1}, and therefore in this particular application we believe that Lemma~\ref{eq-tch} can be applied reliably based on its perturbative proof in~\cite{C:2013uza}. Nevertheless, see Appendix~\ref{sec:rqfc} for a non-perturbative argument for Lemma~\ref{eq-tch}.

Next, we will ground the above ideas in a concrete model of gravity, the so-called holographic brane-world models. In Sec.~\ref{sec:examples}, we then discuss three examples, involving singularity resolution and non-resolution, and show how this is compatible with the above story of constraining singularity resolution using the PW theorem.

\section{Holographic brane-world Models}\label{sec:brane}

Here, we review the basics of holographic brane-world models, where we can showcase the power of the above-mentioned ideas. The theories live on a $d-$dimensional brane, and are (holographically) dual to a $(d+1)-$dimensional ``bulk'', governed by Einstein gravity with a negative cosmological constant, in which the brane is embedded. At low curvature, the brane-world theory is approximately a local-on-the-brane theory of gravity (dominated by Einstein gravity) coupled to a holographic conformal field theory (CFT).  At a species scale, this description breaks down and is replaced by a non-local one from the brane point of view. Nevertheless, the brane intrinsic metric remains well-defined, and its dynamics are governed by the local $(d+1)$-dimensional bulk Einstein gravity description. This is an explicit example of a theory with a geometric UV scale (See Definition~\eqref{eq-geoUVscale}).

 Here, we will describe these models from a bottom-up approach. For a more detailed analysis, including some top-down constructions, we refer the reader to~\cite{Randall:1999vf, Karch:2000ct, Myers:2013lva, Gubser:1999vj,Verlinde:1999xm,Geng:2020fxl}.

In the classical bulk regime (i.e., $G_{d+1}\to 0$), standard AdS$_{d+1}$/CFT$_d$ instructs us to compute the CFT partition function on a metric $g_{ij}$ in terms of the on-shell Einstein gravity action of the bulk with Dirichlet boundary conditions $g_{ij}$ on an infrared cutoff surface. The cutoff surface is then sent to infinity (after proper counter-terms are added) and one obtains the renormalized CFT partition function. In the brane-world model, one instead integrates over the metric $g_{ij}$. In effect, this turns Dirichlet into Neumann boundary conditions and turns the fictitious cut-off surface into a dynamical brane. The (Euclidean) partition function of the brane-world theory can be computed holographically:
\begin{align}\label{eq-partition}
    \log Z_{\text{brane-world}} = -I_{\text{bulk}}
\end{align}
where $I_{\text{bulk}}[g_{ij}]$ is the on-shell action of the bulk geometry together with an end-of-the-world brane with tension $T$. Explicitly,
\begin{align}\label{eq-action1}
    I_{\text{bulk}} &= \frac{1}{16\pi G_{d+1}}\int_{\text{bulk}} dzd^{d}x\,\sqrt{\bar{g}}\left(\bar{R} + \frac{d(d-1)}{\ell_{\text{AdS}}^2}\right) \nonumber\\
    &\quad + \frac{1}{8\pi G_{d+1}}\int_{\text{brane}} d^dx\,\sqrt{g}\,\left(K - 8 \pi G_{d+1} T\right),
\end{align}
where the barred quantities $\bar{g}$ and $\bar{R}$ denote the bulk metric determinant and the bulk Ricci scalar respectively. Also, $g = \det(g_{ij})$, where $g_{ij}$ is the brane's induced metric, and $K = g^{ij} K_{ij}$, where $K_{ij}$ is the extrinsic curvature tensor (with respect to the normal pointing into the bulk).

\begin{figure}[t]
    \centering
    \includegraphics[width=.65\columnwidth]{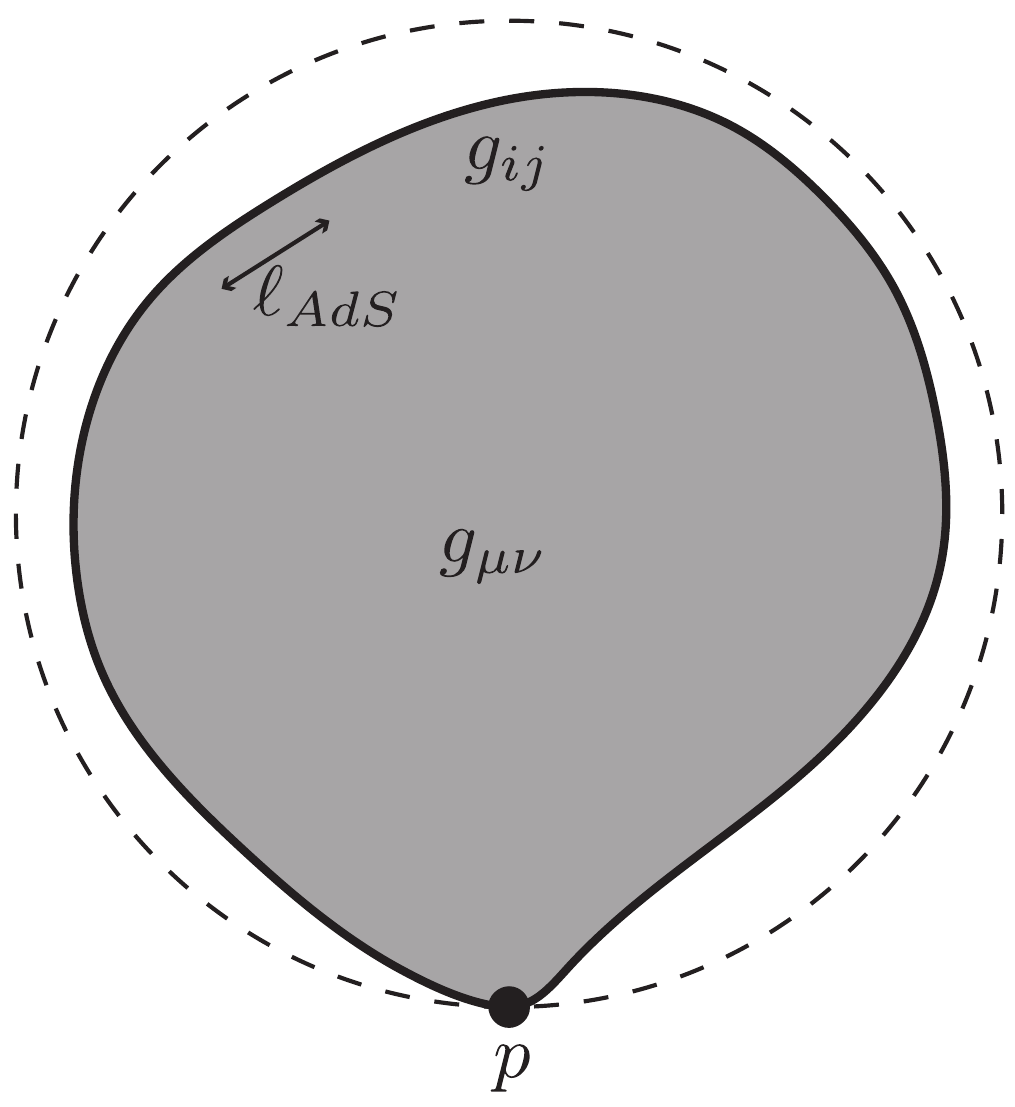}
    \caption{A cross section of Lorentzian time in a holographic brane-world theory (living on the solid black curve) with metric $g_{ij}$, and its bulk dual (shaded grey) with metric $g_{\mu\nu}$. The dashed line indicates the would-be asymptotic boundary of AdS if the bulk was not terminated at the brane. Dual to a classical Einstein gravity bulk (with negative cosmological constant), the brane theory at length scales much larger than $\ell_{\rm AdS}$ is dominated by Einstein gravity (in one lower dimension), but also includes higher curvature and quantum matter corrections. At length scales comparable to $\ell_{\rm AdS}$, this description breaks down, but the intrinsic metric $g_{ij}$ is still computable. Hence, the brane-world theory has a geometric UV scale.}
    \label{fig:brane}
\end{figure}

The location of the brane is dynamically determined by:
\begin{align}\label{eq-braneembedd}
    K_{ij} - g_{ij} K = -8 \pi G_{d+1} T g_{ij}
\end{align}
where $g_{ij}$ and $K_{ij}$ are the brane induced metric and its extrinsic curvature tensor (with respect to the normal pointing into the bulk).

Let us now explain in what sense these theories contain a geometric UV scale. In a given bulk solution with the end-of-the-world brane solution above, let $L$ denote the curvature length scale on the brane. It is possible to show that for $L \gg \ell_{\rm AdS}$, we can approximate the brane-world partition function \eqref{eq-partition} with~\cite{Myers:2013lva}:
\begin{align}\label{eq-actionappr}
      \log Z_{\text{brane-world}} &= \frac{1}{16\pi G_d}\int d^dx\,\sqrt{g}\,\Bigl(-2\Lambda + R \nonumber\\
      &+\ell_{\text{AdS}}^2 (\alpha_1 R^2+\alpha_2 R_{ij}R^{ij}+\alpha_3 R_{ijkl}R^{ijkl})\nonumber\\
      &+\cdots \Bigr)+ \log Z_{\text{CFT}}[g_{ij}] + \cdots,
\end{align}
where $G_d \equiv (d-2)G_{d+1}/\ell_{\rm AdS}$ denotes the effective Newton's constant of the brane theory, and $\alpha_n$ are O(1) Wilson coefficients. The brane cosmological constant $\lambda$ can be tuned by dialing the brane tension $T$. The case $\Lambda=0$ is often referred to as the Randall-Sundrum brane~\cite{Randall:1999vf}, and the case of $\Lambda<0$ is referred to as the Karch-Randall brane~\cite{Karch:2000ct}. Eq.~\eqref{eq-actionappr} shows that when $L\gg\ell_{\rm AdS}$, the brane-world theory is governed by Einstein gravity, plus higher curvature corrections suppressed by appropriate powers of $\ell_{\rm AdS}$. At large enough order in $\ell_{\rm AdS}$, we also obtain a renormalized effective action of the CFT which lives on the brane.

The theory therefore contains both quantum matter effects, and higher curvature corrections to Einstein gravity. But we see that in the $L \gg \ell_{\rm AdS}$ regime, the metric $g_{ij}$ on the brane is governed approximately by Einstein gravity on the brane:
\begin{align}
    g_{ij} = g^{(0)}_{ij} + \text{$\frac{\ell_{\rm AdS}}{L}$ perturbative expansion}
\end{align}
However, even for $L \lesssim \ell_{\rm AdS}$, there still exists a notion of the induced metric on the brane. But the approximate (local) effective theory \eqref{eq-actionappr} is no longer a useful description, due to strong higher curvature and quantum matter effects on the geometry. One must directly solve for the embedding of the brane in the higher dimensional bulk using Eq.~\eqref{eq-braneembedd} to obtain the exact $g_{ij}$. This requires solving a higher dimensional Einstein gravity problem involving a brane whose intrinsic curvature is comparable to the AdS scale. Therefore, we see that $\ell_{\rm AdS}$, which is a mundane scale in the $(d+1)$-dimensional description, becomes a scale of new physics on the brane-world at which the brane theory is not even approximately governed by Einstein gravity.

The scale $\ell_{\rm AdS}$ has a direct interpretation on the brane as a species scale. To see this, let us recall the standard AdS/CFT relationship:
\begin{align}\label{eq-c}
    c \sim \frac{\ell_{\rm AdS}^{d-1}}{G_{d+1}}
\end{align}
where $c$ roughly measures the number of degrees of freedom on the CFT.\footnote{More precisely, in known examples of AdS/CFT in various dimensions, $c$ is the stress-energy tensor $2$-point function normalization.} We can then combine Eq.~\eqref{eq-c} with $G_d = (d-2)G_{d+1}/\ell_{\rm AdS}$ to obtain:
\begin{align}
    \ell_{\rm AdS} \sim (c G_d)^{\frac{1}{d-2}}
\end{align}
where $(c G_d)^{\frac{1}{d-2}}$ is the brane's species scale, which we will henceforth denote by $\ell_S$. Since the bulk is in the limit $G_{d+1}\to0$, with $\ell_{\rm AdS}$ fixed, on the brane-world this means:
\begin{align}
    G_d\to0, ~~~~~~\ell_S \equiv (cG_d)^{\frac{1}{d-2}} = \text{fixed}
\end{align}

The species scale has been discussed extensively in the literature as a scale where the local Einstein gravity description breaks down~\cite{Dvali:2010vm}. Here, we see it explicitly emerge as a natural geometric UV scale in the brane-world scenario.

Not only do holographic brane-world theories have a geometric UV scale, they also satisfy a strong gravitational constraint called restricted quantum focusing, non-perturbatively at their species scale~\cite{Shahbazi-Moghaddam:2022hbw}. The non-perturbative GSL follows from the rQFC subject to very mild technical assumptions. We review this in Appendix~\ref{sec:rqfc}. We therefore have all of the ingredients for a non-trivial constraining of singularity resolution scenarios using the PW singularity theorem. In the next section, we give explicit examples of singularity resolution and non-resolution and its consistency with the PW theorem.

\section{Explicit examples}\label{sec:examples}

In practice, finding exact analytic solutions to the brane-world scenario is tricky, though they can in principle be found numerically as they merely require solving some non-linear partial differential equations associated with Einstein gravity in the bulk and the brane embedding condition~\eqref{eq-braneembedd}. See for example~\cite{Figueras:2011gd, Biggs:2021iqw, Almheiri:2019psy} for numerical constructions. Several exact solutions are known, however, three of which we provide here to showcase the ideas in this paper. The examples are in $d=3$ and are all borrowed from references~\cite{Emparan:1999wa, Emparan:2002px, Emparan:2020znc}. They involve a non-trivial solution known as the AdS$_4$ $C$-metric in which an end-of-the-world brane can be embedded~\cite{PLEBANSKI197698}. In the examples below, we simply discuss the resulting metric on the brane. It would be interesting to find other non-trivial examples in $d\geq3$.

\subsection{Non-rotating BTZ black hole}

In~\cite{Emparan:2020znc}, it was discovered that the BTZ solution is an Einstein gravity truncation of an exact solution in the brane-world scenario (specifically, the Karch-Randall brane). These are found on a brane-world theory with action \eqref{eq-actionappr} satisfying $d=3$, and $\lambda =-1/\ell_3^2$.~\footnote{Furthermore, $\alpha_1=3/8$, $\alpha_2=-1$, $\alpha_3=0$~\cite{Emparan:2020znc}.}

Let us first analyze the Einstein gravity truncated solution. Recall that in $d=3$ Einstein gravity with a negative cosmological constant, the BTZ solution is given by the line element:
\begin{align}\label{eq-BTZ}
       (ds^2)^{(0)} = &-\left( \frac{r^2}{\ell_3^2} - 8 G_3 M \right) dt^2+\frac{dr^2}{\frac{r^2}{\ell_3^2} - 8 G_3 M } + r^2 d\phi^2,
\end{align}
where we have the identification $\phi \sim \phi+2\pi$. Here, $M$ is the black hole mass.

Let us then take a connected spherically symmetric partial Cauchy slice $\Sigma^{(0)}$ located at some fixed time $r$ in the interior (recall that $r$ is a time coordinate in the black hole interior). See Fig.~\ref{fig:threefigs} a. The resulting wedge $\mathcal{W}^{(0)} = D(\Sigma^{(0)})$ clearly satisfies all of the conditions of Penrose's theorem. In particular, $\theta^-\rvert_{\mathcal{W}^{(0)}}>0$, because interior transverse areas decrease towards $r=0$.

Indeed, this geometry contains a Lorentzian conical singularity at $r=0$, and the causal geodesics that hit $r=0$ are incomplete. In the brane-world theory, where matter fields are present, the expectation value of the matter stress-energy tensor diverges at $r=0$~\cite{Emparan:2020znc}. We therefore expect large corrections in this region in the exact metric. This in turn presents a possibility that the singularity in the exact metric would go away.

But, as $\Sigma^{(0)}$ is the low curvature region, it must have an uplift to a partial Cauchy slice $\Sigma$ in the exact metric $g_{\mu\nu}$ and such that $\mathcal{W} = D(\Sigma)$ satisfies the conditions of the Penrose-Wall theorem. Since the GSL holds non-perturbatively in the species scale $\ell_S$, PW reliably forbids that the large corrections near $r=0$ resolve the singularity, and predicts that geodesic incompleteness persists. 

Indeed, the exact metric $g_{ij}$ is given by:
\begin{align}\label{eq-qBTZ}
    ds^2 = &-\left( \frac{r^2}{\tilde{\ell}_3^2} - 8 G_3 \tilde{M} - \frac{\ell_S H(\tilde{M})}{r}  \right) dt^2\nonumber\\
    &+\frac{dr^2}{\frac{r^2}{\tilde{\ell}_3^2} - 8 G_3 \tilde{M} - \frac{\ell_S H(\tilde{M})}{r}} + r^2 d\phi^2
\end{align}
Here, $\tilde{M}$, and $\tilde{\ell}_3$ equal $M$, and $\ell_3$ respectively to leading order in $\ell_S$. Furthermore, $H(\tilde{M})$ is a certain smooth function of $\tilde{M}$ whose details do not matter for our purposes.

Therefore, the exact metric~\eqref{eq-qBTZ} is even more severely singular at $r=0$ compared to $g^{(0)}_{ij}$. For instance, $R_{ij} R^{ij} \sim r^{-6}$ near $r=0$. Note that the metric $\eqref{eq-qBTZ}$ is non-perturbatively different from its Einstein gravity truncation~\eqref{eq-BTZ}. Nevertheless, without the need to solve for it explicitly, we could correctly predict its failure to resolve the singularity geometrically.
\begin{figure*}[t]
    \centering
    \includegraphics[width=0.8\textwidth]{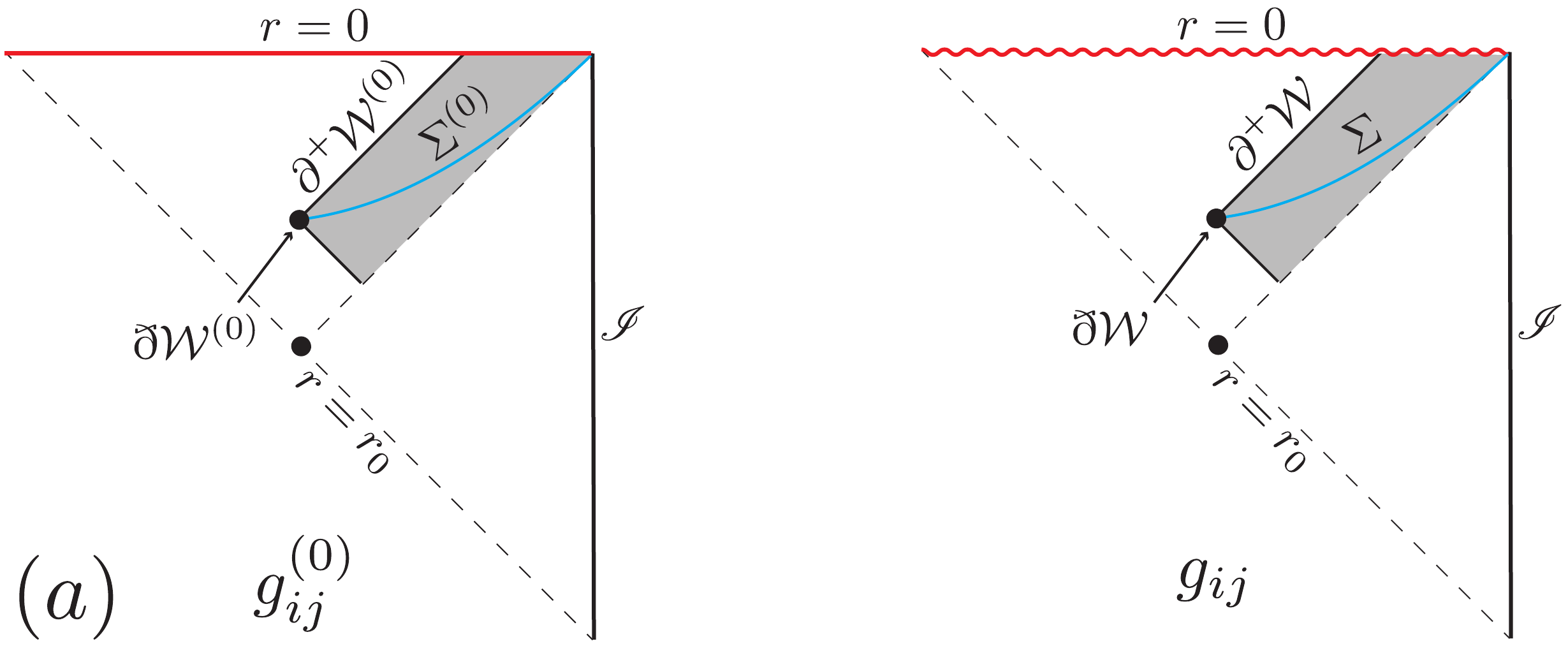}
    \includegraphics[width=0.8\textwidth]{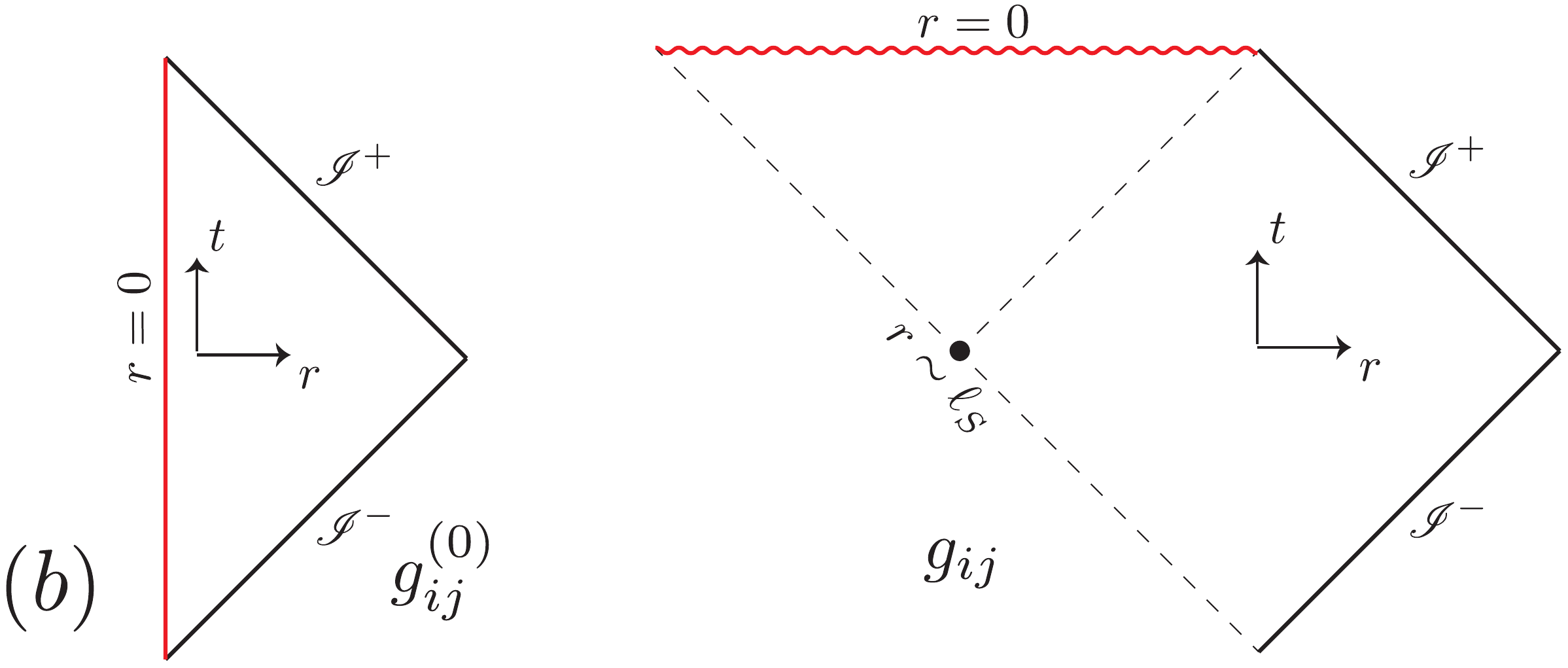}
    \includegraphics[width=0.8\textwidth]{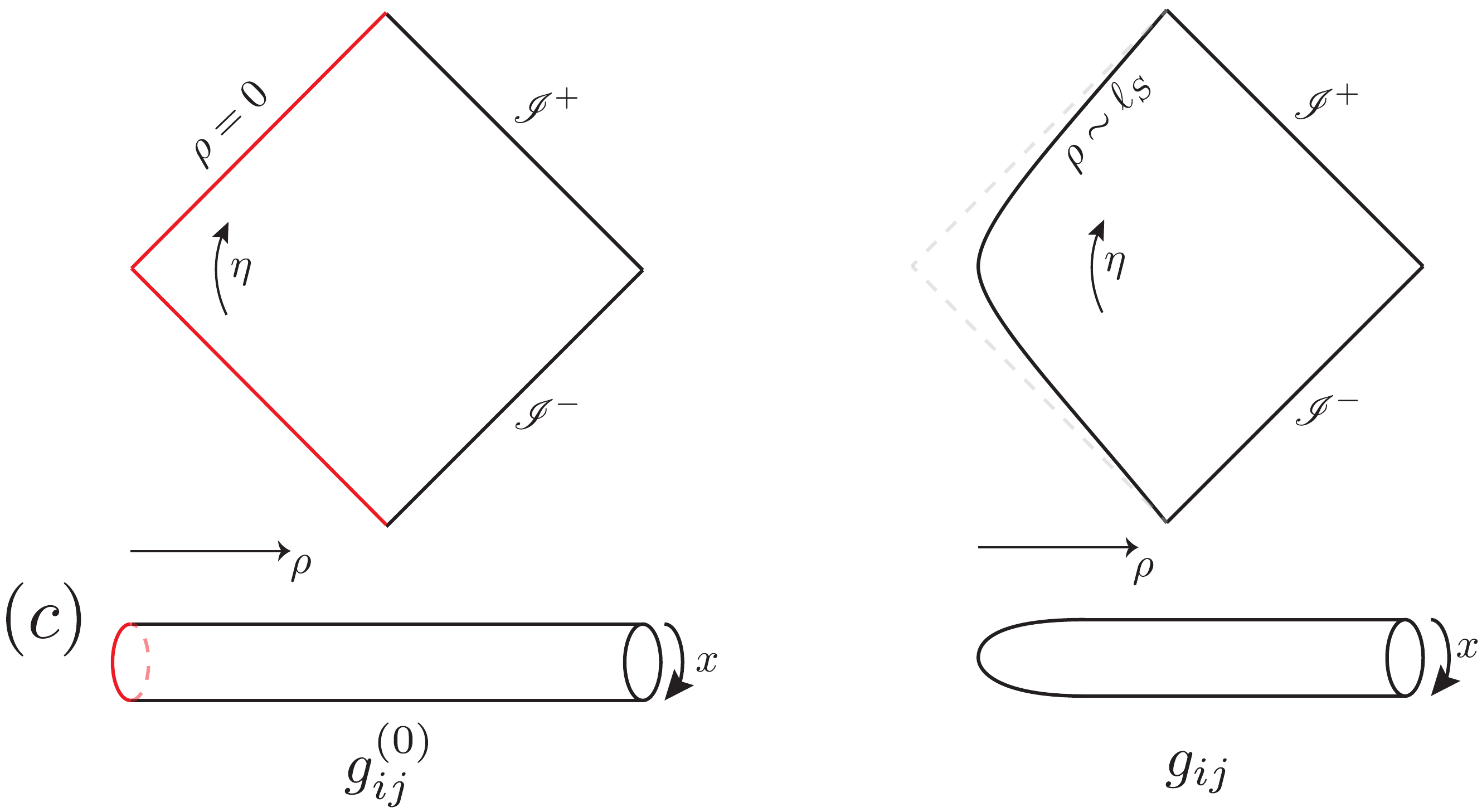}
    \caption{Examples of Sec.~\ref{sec:examples}. Red lines (straight or wiggly) show the loci of geodesic incompleteness: (a) left: BTZ black hole in Einstein gravity. The wedge $D(\Sigma^{(0)})$ satisfies the conditions of PW non-trivially. The generators of $\partial^+\mathcal{W}^{(0)}$ are incomplete. (a) right: In the non-perturbative uplift of the spacetime, the geodesic incompleteness persists as predicted by PW. The conical singularity of BTZ becomes a curvature singularity due to species-scale corrections. (b) left: A conical singularity at $r=0$ in Minkowski spacetime. The quantum fields are divergent at $r=0$ signaling large corrections once $\ell_S$ effects are turned on (b) right: At finite $\ell_S$, the spacetime turns into a species-scale black hole with a spacelike singularity behind its horizon. (c) left: A Rindler wedge with a small Kaluza-Klein circle, with locus of incompleteness on its horizon. (c) right: The singularity is geometrically resolved by species scale physics, as the Kaluza-Klein circle caps off near the would-be singularity at $\rho$ of order $\ell_S$.}
    \label{fig:threefigs}
\end{figure*}
\subsection{Conical singularity in Minkowski spacetime}

The next example is that of a conical deficit singularity in $d=3$ Minkowski spacetime, which can also be obtained as an Einstein gravity truncation of an exact brane-world solution~\cite{Emparan:1999wa}. Recall the line element of a conical deficit:
\begin{align}\label{eq-cone0}
    (ds^2)^{(0)} = -dt^2 + dr^2 + r^2d\phi^2, ~~~~~\phi \sim \phi +2\pi \alpha,
\end{align}
where $0<\alpha\leq 1$. See Fig.~\ref{fig:threefigs} b. For $\alpha<1$, the spacetime has a conical singularity at $r=0$. It then makes sense to remove $r=0$ from the manifold, which means geodesics hitting $r=0$ are incomplete. Again, in the presence of matter fields, the quantum stress-energy tensor $\langle T_{ij} \rangle$ diverges at $r=0$, and therefore the exact solution must receive large corrections near $r=0$. The exact brane-world solution from which Eq.~\eqref{eq-cone0} descends from (as $\ell_S \to 0$) is~\cite{Emparan:1999wa}:
\begin{align}\label{eq-2c14c5}
ds^2 = -\left(1-\frac{r_0}{r}\right)dt^2 + \frac{dr^2}{\left(1-\frac{r_0}{r}\right)} + r^2 d\phi^2,
\end{align}
with
\begin{align}\label{eq-2c14c6}
    r_0 = \ell_S~f(\alpha),
\end{align}
where $f(\alpha)$ is a smooth monotonic function which satisfies $f(\alpha=1)=0$, and $f(\alpha\to0)\to\infty$. The geodesics hitting $r=0$ are incomplete.

The metric \eqref{eq-2c14c5} is that of a $d=3$ black hole solution whose emblackening factor resembles a Schwarzschild metric. This is therefore not a vacuum Einstein gravity solution in $d=3$. Indeed, as we see in Eqs.~\eqref{eq-2c14c5} and \eqref{eq-2c14c6}, the horizon size is of order $\ell_S$, and therefore the physics is not governed by $d=3$ Einstein gravity, which only becomes a good approximation at length scales much larger than $\ell_S$.


This is a scenario where both the Einstein gravity truncation \eqref{eq-cone0} and the exact solution \eqref{eq-2c14c5} are null geodesically incomplete. However, the character of the singularity is quite different. In \eqref{eq-cone0}, we have a timelike singularity in a non-globally hyperbolic spacetime, whereas, after the $\ell_S$ corrections are exactly accounted for, we obtain a small black hole horizon with a spacelike singularity (see Fig.~\ref{fig:threefigs}b).

Here, the specific application explained in subsection~\ref{eq-constrainingidea} does not apply, since the spacetime \eqref{eq-cone0} does not have a wedge to apply the PW theorem to non-trivially. Note that, as we explain in the beginning of subsection~\ref{eq-constrainingidea}, the PW theorem is still non-trivially applicable directly to the exact metric $g_{ij}$: there are (species-scale sized) quantum trapped surfaces inside the black hole \eqref{eq-2c14c5} which imply the geodesic incompleteness at $r=0$.


\subsection{Singular Rindler wedge}

Our last example is one in which a geometric singularity resolution happens. It is obtained by a double analytic continuation of the solution \eqref{eq-2c14c5} and, to our knowledge, has not appeared elsewhere. First, starting from the Lorentzian conical deficit \eqref{eq-cone0}, we send $t \to i x$, and simply re-label $r\to \rho$. The resulting metric is:
\begin{align}
    (ds^2)^{(0)} = \rho^2 d\phi^2+d\rho^2+ dx^2, ~~~~\phi\sim\phi+2\pi\alpha
\end{align}
We recognize this metric as Euclidean space with a conical deficit at $\rho=0$. Euclidean conical deficits induce divergences of $\langle T_{ij} \rangle$. We can then open up to a different Lorentzian section by sending $\phi\to i \eta$ to obtain:
\begin{align}\label{eq-Rindler}
    (ds^2)^{(0)} = -\rho^2 d\eta^2+d\rho^2+ dx^2,~~~x \sim x+ \Delta.
\end{align}
with $\Delta = 4 \pi r_0$. The origin of the $\Delta$ periodicity is that of Euclidean time in the black hole solution. See Fig.~\ref{fig:threefigs} c.

The spacetime in \eqref{eq-Rindler} with $\rho\geq0$ is the Rindler wedge. The conical deficit in its Euclidean preparation means that the quantum fields are at Rindler temperature $2\pi \alpha$ which for $\alpha\neq1$ are not in the Minkowski vacuum sector. In particular, they have $\langle T_{ij} \rangle$ singularities at $\rho=0$, hinting that the exact metric would be quite different there. Furthermore, the spacetime is geodesically incomplete because lightrays can reach the Rindler horizon at finite affine parameter.

The exact brane-world solution is given by
\begin{align}\label{eq-resolved}
ds^2 = -\rho^2 d\eta^2 +\frac{d\rho^2}{\left(1-\frac{\rho_0}{\rho}\right)} +\left(1-\frac{\rho_0}{\rho}\right)dx^2, ~~~x \sim x+ \Delta,
\end{align}
with $\rho_0=\ell_S~f(\alpha)$. The spacetime \eqref{eq-resolved} is geodesically complete! The mechanism by which the singularity is cured is that the Kaluza-Klein direction $x$ shrinks towards $\rho\to0$ and caps off at some $\rho\sim\ell_S$. This replaces the $\rho=0$ region with divergent $\langle T_{ij} \rangle$ with a smooth but highly curved metric, i.e., species-scale curvature.

Consistency with the PW singularity theorem requires that the Rindler wedge \eqref{eq-Rindler} not have any partial Cauchy slice $\Sigma^{(0)}$ in the bounded $L$ region satisfying the conditions of Penrose's theorem. The boundedness condition prohibits us from considering a $\Sigma^{(0)}$ which touches $\rho=0$. It is easy to investigate that no other choice of $\Sigma^{(0)}$ would also lead to a contradiction.


\section{Future Directions}\label{sec:disc}

It would be interesting to investigate the full range of applicability of the above ideas. For instance, does the existence of a trapped surface inside of a Schwarzschild black hole constrain the mechanism of singularity resolution in string theory?

Furthermore, we expect that all of the lessons in this paper also apply to a singularity theorem discovered in~\cite{Bousso:2022tdb}. It would be interesting to constrain singularity resolution scenarios using this theorem as well.

Lastly, we expect that a lot more explicit examples of singularity resolution and non-resolution can be explored in brane-world theories. It would be interesting to construct more analytic or numerical examples to learn more about how the species scale changes the nature of a singularity in Einstein gravity.

\section*{Acknowledgments}

I am grateful to Alek Bedroya, Raphael Bousso, Juan Maldacena, Aron Wall, Edward Witten, and Yoav Zigdon for helpful discussions. I especially thank Raphael Bousso for detailed comments on a draft of this paper.

This work is supported by the LITP at UC Berkeley, Department of Energy through DE-SC0019380, and DE-FOA0002563, by AFOSR award FA9550-22-1-0098 and by a Sloan Fellowship.

\appendix

\section{rQFC implies the GSL}\label{sec:rqfc}

Here, in subsection~\ref{sub1}, we review how the GSL follows from the rQFC on smooth portions of null hypersurfaces, subject to a mild assumption (Assumption~\ref{eq-generic}). We believe that the assumption can either be relaxed, or proved, in the context of brane-world theories with more work along the lines of~\cite{Shahbazi-Moghaddam:2022hbw}. In subsection~\ref{sub2}, we show that using a cross-focusing relation discussed in~\cite{Shahbazi-Moghaddam:2022hbw}, Lemma~\ref{eq-tch} can be reduced to the case where $\eth \mathcal{W}$ and $\eth\mathcal{\tilde{W}}$ coincide in a neighborhood of their touching point $p$. From there, the Lemma follows from the strong subadditivity conjecture (23) of \cite{Bousso:2024iry}.\footnote{The conjecture has so far not been proven exactly, say at the species scale.}

The discussion here in principle applies to any theory where $S_{\rm gen}$ can be defined exactly for any spacetime wedge. But while the \emph{statements} of the rQFC and cross-focusing do not depend on any specific such theory, they have only been proven non-perturbatively in brane-world theories~\cite{Shahbazi-Moghaddam:2022hbw}. Let us therefore remark briefly on the proof strategy used in~\cite{Shahbazi-Moghaddam:2022hbw}, and why it is non-perturbative at the species scale. 

In brane-world theories, given any wedge $\mathcal{W}$ within the brane geometry, one defines its generalized entropy as~\cite{Emparan:2006ni, Myers:2013lva}
\begin{align}\label{eq-Sgenbrane}
    S_{gen}(\mathcal{W}) = \frac{A(X)}{4G_{d+1}},
\end{align}
where $X$ denotes the minimal \emph{bulk} extremal surface anchored to $\eth\mathcal{W}$ on the brane.\footnote{Here, it is assumed that $X$ exists.} For regions on the brane that are large compared to the species scale $\ell_S$ (equivalently, $\ell_{\rm AdS}$), Eq.~\eqref{eq-Sgenbrane} can be expanded order-by-order in $\ell_S$, with the leading term given by $A(\eth \mathcal{W})/ 4G_d$, along with infinitely many sub-leading terms. In particular, this expansion includes the (renormalized) von Neumann entropy of the matter fields in $\mathcal{W}$. For small regions, this expansion breaks down and Eq.~\eqref{eq-Sgenbrane} is the only way to define $S_{\rm gen}$.

From Eq.~\eqref{eq-Sgenbrane}, one can in particular define quantum expansions according to Eq.\eqref{eq-Qexpansion} for smooth portions of $\eth \mathcal{W}$. Both the rQFC and the cross-focusing relation discussed below are proved using properties of the extremal surface deviation equation in the bulk (and assuming the bulk NEC). Since Eq.~\eqref{eq-Sgenbrane} is exact in the species scale $\ell_S$, so are the proofs.

\subsection{Non-perturbative proof of the GSL}\label{sub1}

Let us present a gravitational constraint from which the GSL follows. We later argue that this constraint itself follows from the rQFC assuming a mild genericity condition.

\begin{cond}\label{eq-condcond}
Let $\mathcal{W}_0$ be any wedge, and let $L^-$ denote its boundary of the past (the obvious analogue holds for the boundary of the future). Let $\mathcal{W}_1$ and $\mathcal{W}_2$ be two wedges which can be obtained from $\mathcal{W}_0$ by deforming $\eth \mathcal{W}_0$ along $L^-$. Then:
\begin{align}\label{eq-cond1}
    \Theta^-\rvert_{\mathcal{W}_0} \leq 0 \implies S_{\rm gen}(\mathcal{W}_2) \leq S_{\rm gen}(\mathcal{W}_1),~~\text{for $\mathcal{W}_1 \subseteq \mathcal{W}_2$}
\end{align}
\end{cond}
The obvious analogue condition (similarly derivable from the rQFC) can be stated for $L^+$, the boundary of the future of $\mathcal{W}_0$, and where we replace the LHS of~\eqref{eq-cond1} with $\Theta^+\rvert_{\mathcal{W}_0}$.

The condition $\mathcal{W}_1 \subseteq \mathcal{W}_2$ in \eqref{eq-cond1} enforces that $\mathcal{W}_2$ is further along $L^-$ than $\mathcal{W}_1$. In the literature, Condition~\eqref{eq-cond1} is a special case of Conjecture (33) of an axiomatic framework of quantum gravity outlined in~\cite{Bousso:2024iry}, which, roughly speaking, demands Condition~\ref{eq-condcond} piecewise in the transverse direction. As we will show later, Condition~\eqref{eq-condcond} follows from the rQFC given a genericity assumption.\footnote{In fact, rQFC implies the full Conjecture 33 in~\cite{Bousso:2024iry} subject to this genericity assumption.}

Let us first demonstrate how the GSL follows from Condition\ref{eq-condcond}. By definition, any causal horizon $\mathcal{H}^+$ extends to an asymptotic region. We assume that in this asymptotic region, physics is governed to a good approximation by classical Einstein gravity. In particular, the quantum expansion of the exterior regions of $\mathcal{H}^+$ is well-approximated by the classical expansion of $\mathcal{H}^+$ which is non-negative towards the future by the classical area law, which we assume.

Now, let $\mathcal{W}_1$ and $\mathcal{W}_2$ be two exteriors of $\mathcal{H}^+$ such that $\mathcal{W}_2 \subseteq\mathcal{W}_1$. Let $\mathcal{W}_0 \supseteq \mathcal{W}_2$ be an exterior of $\mathcal{H}^+$ in its asymptotic region. Then, the GSL (see Definition~\ref{eq-GSL}) is an immediate implication of Condition \ref{eq-condcond}, applied to $\mathcal{W}_0$, since $\Theta^-\rvert_{\mathcal{W}_0} \leq 0$ by the asymptotic region area law.

We will now state the rQFC. Let $\mathcal{W}$ be a spacetime wedge, and $L^-$ the boundary of its past.\footnote{The rQFC also applies in the obvious way to $L^+$, the boundary of the future of $\mathcal{W}$, and also to $\partial^\pm \mathcal{W}$.} In a smooth neighborhood of $\eth \mathcal{W}$, let $(v,y^a)$ be coordinates on $L^-$ such that $\partial_v$ are its affine generators pointing towards the past, and such that $v=0$ is $\eth \mathcal{W}$, and $y^a$ denote the transverse direction on $L^-$. In that neighborhood, any function $v = V(y)\geq0$ denotes a wedge obtained from $\mathcal{W}$ by deforming its edge along $L^-$. We henceforth denote any such wedge by the function $V(y^a)$. Let $V_\lambda(y^a)$ be a one-parameter family (parameterized by $\lambda$) of such wedges. The rQFC states the following~\cite{Shahbazi-Moghaddam:2022hbw}:
\begin{cond}\label{eq-rQFCcond}
    Given any $V_\lambda(y^a)$ satisfying $\partial_\lambda V_\lambda(y^a)\geq0,~\forall y^a$, we have:
\begin{align}\label{eq-rQFC}
    \partial_\lambda \Theta^-_{\lambda} (y^a)\leq0,~ \text{for any $y^a,\lambda$ such that $\Theta^-_{\lambda}(y^a)=0$,}
\end{align}
where $\Theta^-_\lambda \equiv \Theta^-\rvert_{\mathcal{W_\lambda}}$, and $\mathcal{W}_\lambda$ denotes the wedge associated to $V_\lambda(y^a)$.
\end{cond}

Let us now state the following genericity assumption:

\begin{assump}[Genericity]\label{eq-generic}
For any $\Theta^-_\lambda (y^a)$ defined as above, we assume that for any $y^a$, $\Theta_\lambda(y^a)$ is a smooth function of $\lambda$. Furthermore, if there exists a $\lambda^*$ such that
\begin{align}
   \Theta_{\lambda^*} (y_0^a) = \left.\partial_{\lambda} \Theta_\lambda\right\rvert_{\lambda=\lambda^*}(y_0^a)=0
\end{align}
for some $y_0^a$, then $\Theta_\lambda (y_0^a)=0$ in a neighborhood of $\lambda^*$.
\end{assump}

We can now show how Eq.~\eqref{eq-cond1} follows from Eq.~\eqref{eq-rQFC} subject to this genericity assumption. First, Eq.~\eqref{eq-cond1} is equivalent to:
\begin{align}\label{eq-cond1implication}
    \Theta^-_{\mathcal{W}}\leq0 \implies \Theta^-_{V(y^a)}\leq0
\end{align}
where we are denoting by $V(y^a)$ the wedge obtained by deforming $\mathcal{W}$ along $L^-$. The implication from Eq.~\eqref{eq-cond1} to Eq.~\eqref{eq-cond1implication} is obvious. For the other direction, we need to, roughly speaking, integrate \eqref{eq-cond1implication} between the two wedges in Eq.~\eqref{eq-cond1}. We leave this as an exercise.

Now, suppose Eq.~\eqref{eq-cond1} is violated. Then, there must exist a wedge $V(y^a)\geq0$ such that $\Theta^-_{V(y^a)}>0$. Let $V_\lambda(y^a)$ denote a one-parameter family satisfying $\partial_\lambda V_\lambda(y^a)\geq0$ for all $y^a$, and such that $V_{\lambda=0}(y^a)=0$, and $V_{\lambda=1}(y^a)=V(y^a)$. Then, by the genericity assumption~\ref{eq-generic}, there must exist a $0<\lambda^*<1$ at which Eq.~\eqref{eq-rQFC} is violated. Therefore, Eq.~\eqref{eq-rQFC} implies Eq.~\eqref{eq-cond1}.

Intuitively speaking, the genericity assumption is there to preclude scenarios like $\Theta_\lambda = \lambda^3$ which satisfy Eq.~\eqref{eq-rQFC} but violate Eq.~\eqref{eq-cond1} when going from negative to positive $\lambda$. It is highly plausible that with small additional steps along the lines of~\cite{Shahbazi-Moghaddam:2022hbw}, the genericity assumption, or condition \ref{eq-condcond} directly can be proven.\footnote{For evidence of the genericity assumption, see the discussion above Eq. (64) in~\cite{Shahbazi-Moghaddam:2022hbw}.} Alternatively, one can perhaps argue that scenarios like $\Theta_\lambda = \lambda^3$ which evade rQFC violation but violate Eq.~\eqref{eq-cond1} can be ruled out because, otherwise a small perturbation would cause a violation of the rQFC.

\subsection{Non-perturbative proof of Lemma~\ref{eq-tch}}\label{sub2}

Let us now review another gravitational constraint which was proved non-perturbatively at the brane-world species scale in~\cite{Shahbazi-Moghaddam:2022hbw}. Again, subject to an analogous genericity assumption, we show that it reduces Lemma~\ref{eq-tch} to the case where $\eth \mathcal{W}$ and $\eth \mathcal{\tilde{W}}$ coincide in a neighborhood of $p$. From here, a strong subadditivity argument was used in~\cite{C:2013uza} to complete the Lemma. In the context of a theory with a geometric UV scale, this strong subadditivity condition was conjectured to be exact in~\cite{Bousso:2024iry} (Conjecture (23) in the reference).

Let us first state the cross-focusing condition proved in~\cite{Shahbazi-Moghaddam:2022hbw} to hold exactly in the species scale. Let $\mathcal{W}$ be a spacetime wedge, which is smooth in a neighborhood of a point $p\in\eth\mathcal{W}$. Let us define coordinates $(u,v,y^a)$ at $p$, where $\partial_v$ are past-directed affine generators of $L^-$, the boundary of the past of $\mathcal{W}$, $\partial_u$ are future-directed affine generators of $L^+$, the boundary of the future of $\mathcal{W}$. And such that $u=v=0$ is $\eth\mathcal{W}$, and $y^a$ denote the transverse directions.

\begin{cond}\label{eq-2v5gs}
   Let wedges denoted by $u= U_\alpha(y^a)\geq0$ for parameter $\alpha\geq0$ satisfy 
\begin{align}
    &\partial_\alpha U_\alpha(y^a)\geq0,~~\forall \alpha,y^a\\
    &U_\alpha(y_0^a) = 0~~~\forall \alpha.
\end{align}
where $y^a_0$ denotes $p$. In words, the edge of the wedge moves to the future along $L^+$ or stay in place as $\alpha$ increases, except at $p$ where it stay in place.

The cross-focusing condition states:
\begin{align}\label{eq-crossrQFC}
    \partial_\alpha \Theta^-_{\alpha} (y_0^a)\leq 0,~~~\text{for $\alpha$ such that $\Theta^-_{\alpha}(y_0^a)=0$,}
\end{align} 
\end{cond}

In the context of Lemma~\eqref{eq-tch}, let $\mathcal{W}_b$ be a wedge satisfying $\mathcal{\tilde{W}}\subseteq \mathcal{W}_b \subseteq \mathcal{W}$ such that $\eth \mathcal{W}_b$ coincides with $\eth \mathcal{\tilde{W}}$ in a small neighborhood of $p$. Based the coordinates $(u,v,y^a)$ described above around $\eth \mathcal{W}_b$ (so that $u=v=0$ is $\eth \mathcal{W}_b$). Then $\eth \mathcal{W}$ can be specified as $u= U(y^a)\geq0$ and $v=V(y^a)\geq0$. Let us first discuss the special case where $V(y^a)=0$. We can then find a one-parameter family $U_\alpha(y^a)$ such that at $\alpha=0$ it coincides with $\eth \mathcal{W}_b$ and at $\alpha=1$ it coincides with $\eth \mathcal{W}$. Then, subject to the analogue of the genericity assumption~\eqref{eq-generic}, Eq.~\eqref{eq-crossrQFC} implies that if $\Theta^-\rvert_{\mathcal{W}}>0$, then $\Theta^-\rvert_{\mathcal{W}_b}>0$.

This argument can be generalized to the case of $V(y^a)\geq0$ by a combined use of Eq.~\eqref{eq-crossrQFC} and Eq.~\eqref{eq-rQFC}. The conclusion is that we can reduce the Lemma~\eqref{eq-tch} to the following statement. If $\mathcal{W}_b$ and $\mathcal{\tilde{W}}$ are such that $\mathcal{\tilde{W}} \subseteq \mathcal{W}_b$, and  $\eth \mathcal{W}_b$ and $\eth \mathcal{\tilde{W}}$ coincide in a neighborhood of a point $p$, then $\Theta^-\rvert_{\mathcal{W}_b}(p) > 0 \implies \Theta^-\rvert_{\mathcal{\tilde{W}}}(p)>0$. As we discussed above, this follows from conjecture (23) of Ref.~\cite{Bousso:2024iry}. But it would be interesting to directly prove it non-perturbatively on the brane-world.

\bibliographystyle{apsrev4-1}
\bibliography{all}

\end{document}